\newcommand{\Tool}[1]{\textsf{\small #1}}
\newcommand{\STool}[1]{\textsf{\scriptsize #1}}
\newcommand{\Minisat}{\Tool{minisat}}
\newcommand{\SMinisat}{\STool{minisat}}
\newcommand{\Cryptominisat}{\Tool{cryptominisat}}
\newcommand{\substitution}[2]{#1/#2}
\newcommand{\simplification}[3]{\ensuremath{#1 \left[ \substitution{#2}{#3}\right] } }
\newcommand{\genrule}{\ensuremath{\textsf{$\oplus$\small-Gen}}}
\newcommand{\unitruleP}{\ensuremath{\textsf{\small$\oplus$-Unit$^+$}}}
\newcommand{\unitruleN}{\ensuremath{\textsf{\small$\oplus$-Unit$^-$}}}
\newcommand{\eqvruleP}{\ensuremath{\textsf{$\oplus$-Eqv$^+$}}}
\newcommand{\eqvruleN}{\ensuremath{\textsf{$\oplus$-Eqv$^-$}}}
\newcommand{\True}{\top}
\newcommand{\False}{\bot}
\newcommand{\Set}[1]{\{{#1}\}}
\newcommand{\Setdef}[2]{\ensuremath{\left\{{#1}\mid{#2}\right\}}}
\newcommand{\Tuple}[1]{{\ensuremath\langle{#1}\rangle}}
\newcommand{\Pair}[2]{{\ensuremath\langle{#1,#2}\rangle}}
\newcommand{\cnf}[1]{\operatorname{cnf}(#1)}
\newcommand{\X}{\oplus}
\newcommand{\Iff}{\Leftrightarrow}
\newcommand{\Implies}{\Rightarrow}
\newcommand{\defrule}[2]{
\begin{tabular}{@{}c@{}}
        \ensuremath{#1} \\
    \hline
        \ensuremath{#2} 
    \end{tabular}}
\newcommand{\inferencerule}[3]{\defrule{#1\qquad#2}{#3}}
\newcommand{\UPsys}{\ensuremath{\textup{\textsf{\small{UP}}}}}
\newcommand{\UPderiv}{\ensuremath{\mathrel{\vdash_{\textup{\textsf{\scriptsize{UP}}}}}}}
\newcommand{\ECAIsys}{\ensuremath{\textup{\textsf{\small{Subst}}}}}
\newcommand{\ECsys}{\ensuremath{\textup{\textsf\small{EC}}}}
\newcommand{\trail}[1]{\ensuremath{\pi_{#1}}}
\newcommand{\vars}{\operatorname{vars}}
\newcommand{\VarsOf}[1]{\vars(#1)}
\newcommand{\lits}{\operatorname{lits}}
\newcommand{\AL}{l}
\newcommand{\IL}{\hat{l}}
\newcommand{\KW}[1]{\textsf{\footnotesize{}{#1}}}
\newcommand{\While}{\KW{while}}
\newcommand{\For}{\KW{for}}
\newcommand{\If}{\KW{if}}
\newcommand{\Elif}{\KW{else if}}
\newcommand{\Else}{\KW{else}}
\newcommand{\Break}{\KW{break}}
\newcommand{\Comment}[1]{\textsf{\footnotesize/*{#1}*/}}
\newcommand{\Confl}{\mathit{confl}}
\newcommand{\Models}{\ensuremath{\models}}
\newcommand{\Conseq}{\ensuremath{\models}}
\newcommand{\set}[1]{\left\{{#1}\right\}}
\newcommand{\Form}{\ensuremath{\phi}}
\newcommand{\Cut}{W}
\newcommand{\RS}[1]{\operatorname{reasons}(#1)}
\newcommand{\Labfunc}{L}
\newcommand{\Lab}[1]{\Labfunc(#1)}
\newcommand{\clauses}[1]{\ensuremath{\phi_{#1}}}
\newcommand{\xorclauses}{\clauses{\textup{xor}}}
\newcommand{\orclauses}{\clauses{\textup{or}}}
\newcommand{\igraph}{G}
\newcommand{\vertices}{V}
\newcommand{\edges}{E}
\newcommand{\CExpl}[1]{\mathit{Expl}(#1)}
\newcommand{\CRLits}[2]{\operatorname{ReasonLits}_{\land}(#1,#2)}
\newcommand{\PExpl}[1]{\mathit{Expl_{\X}}(#1)}
\newcommand{\PRLits}[2]{\operatorname{ReasonLits}_{\X}(#1,#2)}
\newcommand{\NofPaths}[2]{\mathit{\#b\mbox{-}paths}(#1,#2)}
\newcommand{\tmpfname}[1]{f_{#1}}
\newcommand{\tmpf}[2]{\tmpfname{#2}(#1)}
\begin{document}

\title{Conflict-Driven XOR-Clause Learning \\(extended version)\thanks{This work has been financially supported by the Academy  of Finland under the Finnish Centre of Excellence in Computational Inference (COIN). The original version of the paper~\cite{LJN:SAT2012} has been presented in the 15th International Conference on Theory and Applications of Satisfiability Testing, SAT 2012.}
}

\author{Tero Laitinen, Tommi Junttila, and Ilkka Niemel\"a}
\institute{Aalto University\\
  Department of Information and Computer Science\\
  PO Box 15400, FI-00076 Aalto, Finland\\
  \email{\{Tero.Laitinen,Tommi.Junttila,Ilkka.Niemela\}@aalto.fi}
}

\maketitle

\begin{abstract}
  Modern conflict-driven clause learning (CDCL) SAT solvers
  are very good in solving conjunctive normal form (CNF) formulas.
  However, some application problems involve lots of parity (xor) constraints
  which are not necessarily efficiently handled if translated into CNF.
  This paper studies solving CNF formulas augmented with xor-clauses
  in the DPLL(XOR) framework
  where a CDCL SAT solver is coupled with a separate xor-reasoning module.
  New techniques for analyzing xor-reasoning derivations are developed,
  allowing one to obtain smaller CNF clausal explanations for xor-implied
  literals and also to derive and learn new xor-clauses.
  It is proven that these new techniques allow
  very short unsatisfiability proofs for some formulas whose CNF
  translations do not have polynomial size resolution proofs,
  even when a very simple xor-reasoning module
  capable only of unit propagation is applied.
  The efficiency of the proposed techniques is evaluated
  on a set of challenging logical cryptanalysis instances.
\end{abstract}

%--------------------------------------------------------------------------
%
%--------------------------------------------------------------------------
%
%--------------------------------------------------------------------------
\section{Introduction}

Modern propositional satisfiability (SAT) solvers (see e.g.~\cite{Handbook:CDCL}) have been
successfully applied in a number of industrial application domains.
Propositional satisfiability instances are typically encoded in conjunctive
normal form (CNF) which allows very efficient Boolean constraint propagation
and conflict-driven clause learning (CDCL) techniques.
However, such CNF encodings may not allow optimal exploitation of the problem
structure in the presence of parity (xor) constraints;
such constraints are abundant especially in the logical cryptanalysis domain
and also present in circuit verification and bounded model checking.
An instance consisting only of parity constraints can be solved in polynomial
time using Gaussian elimination,
but even state-of-the-art SAT solvers
relying only on basic Boolean constraint propagation and CDCL
can scale poorly on the corresponding CNF encoding.

In this paper we develop new techniques for exploiting structural properties
of xor constraints (i.e.~linear equations modulo 2)
in the recently introduced DPLL(XOR)
framework~\cite{LJN:ECAI2010,LJN:ICTAI2011}
where a problem instance is given as a combination of CNF and xor-clauses.
In the framework a CDCL SAT solver takes care of the CNF part while
a separate xor-reasoning module performs propagation on the xor-clauses.
In this paper we introduce new techniques for explaining
why a literal was implied or why a conflict occurred in the xor-clauses part;
such explanations are needed by the CDCL part.
The new core idea is to not see xor-level propagations as implications
but as linear arithmetic equations.
As a result, the new proposed \emph{parity explanation} techniques can
(i) provide smaller clausal explanations for the CDCL part,
and also
(ii) derive new xor-clauses that can then be learned in the xor-clauses part.
The goal of learning new xor-clauses is,
similarly to clause learning in CDCL solvers,
to enhance the deduction capabilities of the reasoning engine.
We introduce the new techniques on a very simple xor-reasoning module
allowing only unit propagation on xor-clauses
and
prove that, even when new xor-clauses are not learned,
the resulting system with parity explanations can efficiently
solve parity problems whose CNF translations are very hard for resolution.
We then show that the new parity explanation techniques also
extend to more general xor-reasoning modules,
for instance to the one in~\cite{LJN:ECAI2010}
capable of equivalence reasoning in addition to unit propagation.
Finally,
we experimentally evaluate the effect of the proposed techniques
on a challenging benchmark set modelling cryptographic attacks.
The proofs of Theorems \ref{Thm:PExpCorrectness}--\ref{Thm:PExpUrquhart}
are in the appendix.
%are provided % can be found
%at \url{http://users.ics.tkk.fi/tolaiti2/sat2012.pdf}.

\textbf{Related work.}
In~\cite{BaumgartnerMassacci:CL2000} a calculus combining basic DPLL without
clause learning and Gauss elimination is proposed;
their Gauss rules are similar to the general rule $\genrule$
we use in Sect.~\ref{Sec:General}.
The solvers EqSatz~\cite{Li:AAAI2000}, lsat~\cite{DBLP:conf/cp/OstrowskiGMS02}, and March\_eq~\cite{HeuleEtAl:SAT2004} incorporate parity reasoning into
DPLL without clause learning, extracting parity constraint information from
CNF and during look-ahead, and exploiting it during the preprocessing phase and  search.
MoRsat~\cite{Chen:SAT2009} extracts parity constraints from a CNF formula,
uses them for simplification during preprocessing,
and
proposes a watched literal scheme for unit propagation on parity constraints.
Cryptominisat~\cite{SoosEtAl:SAT2009,Soos}, like our approach,
accepts a combination of CNF and xor-clauses as input.
It uses the computationally relatively expensive Gaussian elimination as
the xor-reasoning method and
by default only applies it at the first levels of the search;
we apply lighter weight xor-reasoning at all search levels.
Standard clause learning is supported in MoRsat and Cryptominisat;
our deduction system characterization of xor-reasoning allows us
to exploit the linear properties
of xor-clauses to obtain smaller CNF explanations of xor-implied literals
and xor-conflicts
as well as to derive and learn new xor-clauses.

\section{Preliminaries}

An \emph{atom} is either a propositional variable or
the special symbol $\top$ which denotes the constant ``true''.
A \emph{literal} is an atom $A$ or its negation $\neg A$;
we identify $\neg\top$ with $\bot$ and $\neg \neg A$ with $A$.
A traditional, non-exclusive \emph{or-clause} is a disjunction
$l_1 \vee \dots \vee l_n$ of literals.
An {\it xor-clause} is an expression of form $ l_1 \oplus \dots \oplus l_n$,
where $l_1,\dots,l_n$ are literals and the symbol $\oplus$ stands for the
exclusive logical or.
In the rest of the paper,
we implicitly assume that each xor-clause is in a \emph{normal form}
such that (i) each atom occurs at most once in it, and
(ii) all the literals in it are positive. % i.e.~the negation does not appear.
The unique (up to reordering of the atoms) normal form
for an xor-clause can be obtained by applying the following rewrite rules
in any order until saturation: %is reached:
(i) ${{\neg A} \oplus C} \leadsto {A \oplus \top \oplus C}$, and
(ii) ${A \oplus A \oplus C} \leadsto C$,
where $C$ is a possibly empty xor-clause and $A$ is an atom.
For instance,
the normal form of ${\neg x_1} \oplus x_2 \oplus x_3 \oplus x_3$ is 
$x_1 \oplus x_2 \oplus \top$,
while the normal form of $x_1 \oplus x_1$ is the empty xor-clause $()$.
We say that an xor-clause is \emph{unary} if it is either of form $x$ or
$x \oplus \top$ for some variable $x$;
we will identify $x \oplus \top$ with the literal $\neg x$.
An xor-clause is \emph{binary} (\emph{ternary}) if
its normal form has two (three) variables.
A {\it clause} is either an or-clause or an xor-clause. 

A {\it truth assignment} $\pi$ is a set of literals such that
$\top \in \pi$ and $\forall l \in \pi : {\neg l} \notin \pi$.
We define the ``satisfies'' relation $\models$ between a truth assignment
$\pi$ and logical constructs as follows:
(i) if $l$ is a literal,
then $\pi \models l$ iff $l \in \pi$,
(ii) if $C = (l_1 \lor \dots \lor l_n)$ is an or-clause,
then $\pi \models C$ iff $\pi \models l_i$ for some $l_i \in \set{l_1,\ldots,l_n}$,
and
(iii)
if $C = (l_1 \oplus \dots \oplus l_n)$ is an xor-clause,
then $\pi \models C$ iff
$\pi$ is total for $C$ (i.e.~$\forall 1 \le i \le n : {l_i \in \pi} \lor {\neg l_i \in \pi}$)
and
$\pi \models l_i$ for an odd number of literals of $C$.
Observe that no truth assignment satisfies
the empty or-clause $()$ or the empty xor-clause $()$,
i.e.~these clauses are synonyms for $\bot$.

A \emph{cnf-xor formula} $\Form$ is a conjunction of clauses,
expressible as a conjunction
\begin{equation}
\Form = \orclauses \land \xorclauses,
\end{equation}
where $\orclauses$ is a conjunction of or-clauses and $\xorclauses$ is a conjunction of xor-clauses.
A truth assignment $\pi$ \emph{satisfies} $\Form$,
denoted by $\pi \models \Form$,
if it satisfies each clause in it;
$\Form$ is called \emph{satisfiable} if there exists such a truth assignment
satisfying it, and \emph{unsatisfiable} otherwise.
The \emph{cnf-xor satisfiability} problem studied in this paper is to decide
whether a given cnf-xor formula has a satisfying truth assignment.
A formula $\Form'$ is a \emph{logical consequence} of a formula $\Form$,
denoted by $\Form \Conseq \Form'$,
if $\pi \models \Form$ implies $\pi \models \Form'$
for all truth assignments $\pi$.
The set of variables occurring in a formula $\Form$ is denoted by
$\vars(\Form)$,
and 
$\lits(\Form) = \Setdef{x, \neg x}{x \in \vars(\Form)}$ is
the set of literals over $\vars(\Form)$.
We use $\simplification{C}{A}{D}$ to denote the (normal form) xor-clause that is
identical to $C$ except that all occurrences of the atom $A$ in $C$
are substituted with $D$ once.
For instance,
$\simplification{(x_1 \oplus x_2 \oplus x_3)}{x_1}{(x_1 \oplus x_3)} =
{x_1 \oplus x_3 \oplus x_2 \oplus x_3} =
{x_1 \oplus x_2}$.

\section{The DPLL(XOR) framework}

\newcommand{\VA}{V_{\textup{a}}}
\newcommand{\VB}{V_{\textup{b}}}

The idea in the DPLL(XOR) framework~\cite{LJN:ECAI2010}
for satisfiability solving of cnf-xor formulas
$\Form = \orclauses \land \xorclauses$
is similar to that in the DPLL($T$) framework for solving
satisfiability of quantifier-free first-order formulas modulo
a background theory $T$
(SMT, see e.g.\ \cite{NieuwenhuisEtAl:JACM06,Handbook:SMT}).
In DPLL(XOR),
see Fig.~\ref{fig:DPLLXOR} for a high-level pseudo-code,
one employs a conflict-driven clause learning (CDCL) SAT solver
(see e.g.~\cite{Handbook:CDCL})
to search for a satisfying truth assignment $\pi$
over all the variables in $\Form = \orclauses \land \xorclauses$.\footnote{See \cite{LJN:ECAI2010} for a discussion on handling ``xor-internal'' variables occurring in $\xorclauses$ but not in $\orclauses$.}
The CDCL-part takes care of the usual unit clause propagation on the cnf-part
$\orclauses$ of the formula (line 4 in Fig.~\ref{fig:DPLLXOR}),
conflict analysis and non-chronological backtracking (line 15--17),
and
heuristic selection of decision literals (lines 19--20) which
extend the current partial truth assignment $\pi$ towards a total one.

\begin{figure}[bt]
{\small
\begin{tabbing}
99.\={mm}\={mm}\={mm}\=\kill
solve($\Form = {\orclauses \land \xorclauses}$):\\
1.\>initialize xor-reasoning module $M$ with $\xorclauses$\\
2.\>$\pi = \Tuple{}$\qquad\Comment{the truth assignment}\\
3.\>\While{} true:\\
4.\>\>$(\pi',\Confl) = \textsc{unitprop}(\orclauses,\pi)$\quad\Comment{unit propagation}\\
5.\>\>\If{} not $\Confl$:\qquad\Comment{apply xor-reasoning}\\
6.\>\>\>\For{} each literal $l$ in $\pi'$ but not in $\pi$: $M$.\textsc{assign}($l$)\\
7.\>\>\>$(\IL_1,...,\IL_k) = M.\textsc{deduce}()$\\
8.\>\>\>\For{} $i=1$ to $k$:\\
9.\>\>\>\>$C = M.\textsc{explain}(\IL_i)$\\
10.\>\>\>\>\If{} $\IL_i = \False$ or ${\neg \IL_i} \in \pi'$: $\Confl = C$, \Break{}\\
11.\>\>\>\>\Elif{} $\IL_i \notin \pi'$: add $\IL_i$ to $\pi'$ with the implying or-clause $C$\\
12.\>\>\>\If{} $k > 0$ and not $\Confl$:\\
13.\>\>\>\>$\pi = \pi'$; continue\quad\Comment{unit propagate further}\\
14.\>\>let $\pi = \pi'$\\
15.\>\>\If{} $\Confl$:\qquad\Comment{standard Boolean conflict analysis}\\
16.\>\>\>analyze conflict, learn a conflict clause\\
17.\>\>\>backjump or return ``unsatisfiable'' if not possible\\
18.\>\>\Else{}: \\
19.\>\>\>add a heuristically selected unassigned literal in $\Form$ to $\pi$\\
20.\>\>\>or return ``satisfiable'' if no such variable exists
\end{tabbing}%
}%
\caption{The essential skeleton of the DPLL(XOR) framework}%
\label{fig:DPLLXOR}%
\end{figure}%

To handle the parity constraints in the xor-part $\xorclauses$,
an \emph{xor-reasoning module} $M$ is coupled with the CDCL solver.
The values assigned in $\pi$ to the variables in $\VarsOf{\xorclauses}$
by the CDCL solver are communicated as \emph{xor-assumption literals}
to the module (with the \textsc{assign} method on line 6 of the pseudo-code).
If $l_1,...,l_m$ are the xor-assumptions communicated to the module so far,
then the \textsc{deduce} method (invoked on line 7) of the module
is used to deduce a (possibly empty) list of \emph{xor-implied literals}
$\IL$ that are logical consequences of the xor-part $\xorclauses$ and
xor-assumptions,
i.e.~literals for which ${\xorclauses \land l_1 \land ... \land l_m} \Models \IL$ holds.
These xor-implied literals can then be added to the current truth assignment
$\pi$ (line 11) and the CDCL part invoked again to perform unit clause
propagation on these.
The conflict analysis engine of CDCL solvers requires that
each implied (i.e.~non-decision) literal has an \emph{implying clause},
i.e.~an or-clause that forces the value of the literal by unit propagation
on the values of literals appearing earlier in
the truth assignment (which at the implementation level is a sequence of
literals instead of a set).
For this purpose
the xor-reasoning module has a method \textsc{explain} that,
for each xor-implied literal $\IL$,
gives an or-clause $C$ of form ${l'_1 \land ... \land l'_k} \Implies \IL$,
i.e.~${\neg l'_1} \lor ... \lor {\neg l'_k} \lor \IL$,
such that
(i) $C$ is a logical consequence of $\xorclauses$,
and
(ii) $l'_1,...,l'_k$ are xor-assumptions made or xor-implied literals returned
before $\IL$.
An important special case occurs when the ``false'' literal $\False$ 
is returned as an xor-implied literal (line 10), i.e.~when an \emph{xor-conflict} occurs;
this implies that ${\xorclauses \land l_1 \land ... \land l_m}$
is unsatisfiable.
In such a case, the clause returned by the \textsc{explain} method
is used as the unsatisfied clause $\Confl$ initiating the conflict analysis
engine of the CDCL part (lines 10 and 15--17).

In addition to the \textsc{assign}, \textsc{deduce}, and \textsc{explain}
methods,
an xor-reasoning module must also implement methods that allow
xor-assumptions to be retracted from the solver in order to
allow backtracking in synchronization with the CDCL part (line 17).

Naturally,
there are many \emph{xor-module integration strategies}
that can be considered in addition to the one described in
the above pseudo-code.
For instance,
the xor-explanations for the xor-implied literals can be computed
always (as in the pseudo-code for the sake of simplicity)
or
only when needed in the CDCL-part conflict analysis
(as in a real implementation for efficiency reasons).

\section{The xor-reasoning module ``\UPsys''}

To illustrate our new parity-based techniques,
we first introduce a very simple xor-reasoning module ``\UPsys{}''
which only performs unit propagation on xor-clauses.
As such it can only perform the same deduction as CNF-level unit propagation
would on the CNF translation of the xor-clauses.
However, with our new parity-based xor-implied literal explanation techniques
(Sect.~\ref{Sec:ParityExplanations}) we can deduce much stronger clauses (Sect.~\ref{Sec:Resolution})
and
also new xor-clauses that can be learned (Sect.~\ref{Sec:Learning}).
In Sect.~\ref{Sec:General} we then generalize the results to other
xor-reasoning modules such as the the one in \cite{LJN:ECAI2010}
incorporating also equivalence reasoning.

\begin{figure}[t]
  \centering
  \small
  \begin{tabular}{l@{\qquad}l}
    \unitruleP:\ 
    $\inferencerule{x}{C}{\simplification{C}{x}{\top}}$
    &
    \unitruleN:\ 
    $\inferencerule{x \oplus \True}{C}{\simplification{C}{x}{\False}}$
    \\
  \end{tabular}%
  \caption{Inference rules of \UPsys{}; the symbol $x$ is variable and $C$ is an xor-clause}
  \label{Fig:UPRules}%
\end{figure}

As explained above,
given a conjunction of xor-clauses $\xorclauses$ and a
sequence $l_1,\dots,l_k$ of xor-assumption literals,
the goal of an xor-reasoning module is to deduce xor-implied literals and
xor-conflicts over $\psi = {\xorclauses \land l_1 \land \dots \land l_k}$.
To do this,
the \UPsys{}-module implements a deduction system with
the inference rules shown in Fig.~\ref{Fig:UPRules}.
%
%We define the deduction of the \UPsys{}-module formally as
An {\it \UPsys{}-derivation} on $\psi$ is a finite,
vertex-labeled directed acyclic graph $G = \Tuple{V,E,\Labfunc}$,
where each vertex $v\in V$ is labeled with an xor-clause $\Labfunc(v)$
and
the following holds for each vertex $v$:
\begin{enumerate}
\item
  $v$ has no incoming edges (i.e. is an {\it input vertex}) and
  $\Labfunc(v)$ is an xor-clause in $\psi$,
  or 
\item
  $v$ has two incoming edges originating from vertices $v_1$ and $v_2$,
  and $\Labfunc(v)$ is derived from $\Labfunc(v_1)$ and $\Labfunc(v_2)$
  by using one of the inference rules.
\end{enumerate}

\begin{figwindow}[2,r,%
\makebox[.49\textwidth][c]{\includegraphics[width=.47\textwidth]{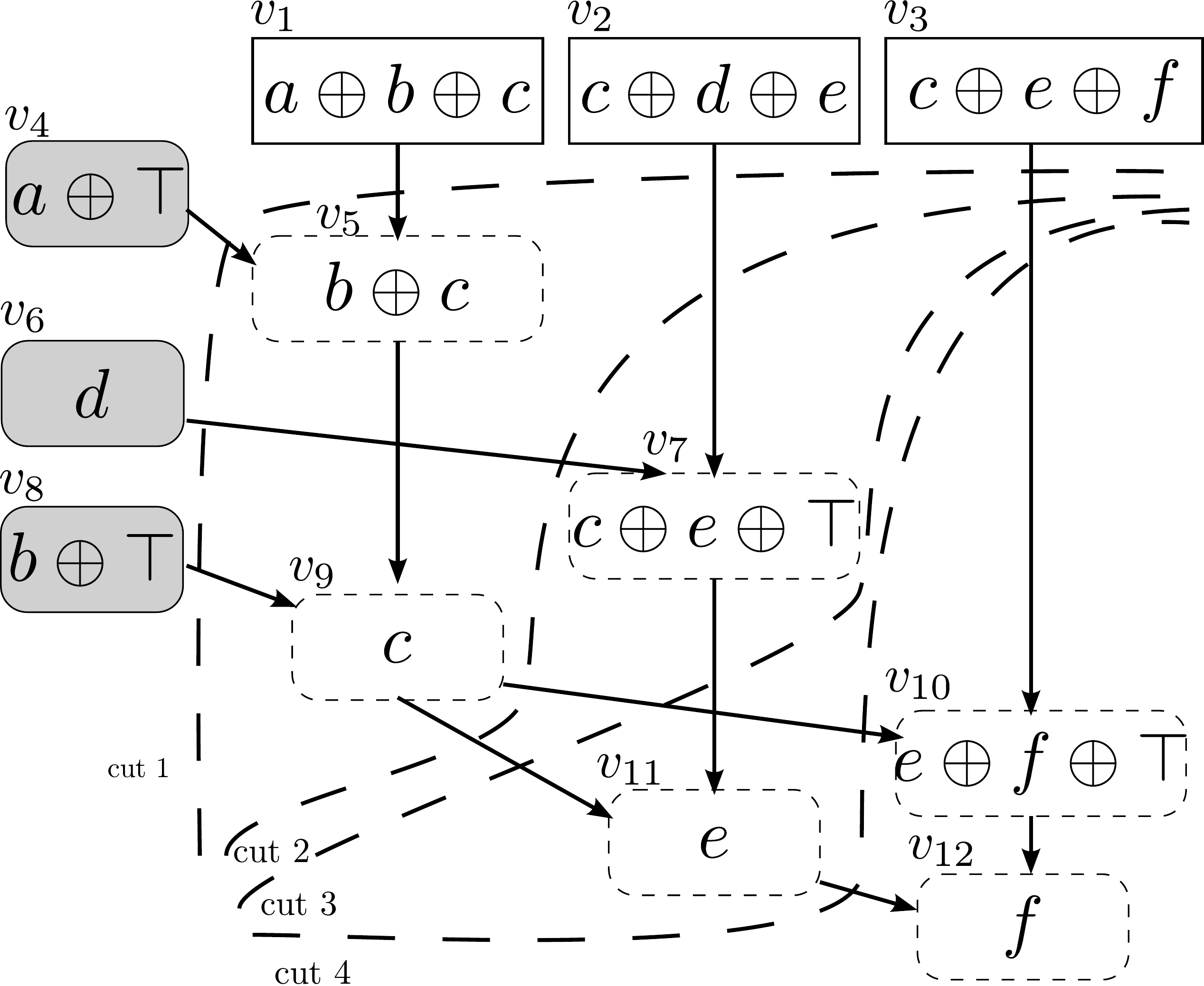}},%
{A $\UPsys$-derivation\label{Fig:UPDerivation}}]
As an example, Fig.~\ref{Fig:UPDerivation} shows a \UPsys-derivation
for $\xorclauses \land (\neg a) \land (d) \land (\neg b)$,
where $\xorclauses = (a \X b \X c) \land (c \X d \X e) \land (c \X e \X f)$
(please ignore the ``cut'' lines for now).
An xor-clause $C$ is \emph{$\UPsys$-derivable} on $\psi$,
denoted by $\psi \UPderiv C$,
if there exists a \UPsys{}-derivation on $\psi$ that contains a vertex
labeled with $C$;
the \UPsys-derivable unary xor-clauses are
the xor-implied literals that the \UPsys-module returns
when its \textsc{deduce} method is called.
In Fig.~\ref{Fig:UPDerivation},
the literal $f$ is $\UPsys$-derivable and
the $\UPsys$-module returns $f$ as an xor-implied literal after
$\neg a$, $d$, and $\neg b$ are given as xor-assumptions.
As a direct consequence of the definition of xor-derivations and
the soundness of the inference rules,
it holds that if an xor-derivation on $\psi$ contains a vertex labeled
with the xor-clause $C$, then $C$ is a logical consequence of $\psi$,
i.e.~$\psi \UPderiv C$ implies $\psi \Conseq C$.
A $\UPsys$-derivation on $\psi$ is a {\it $\UPsys$-refutation of $\psi$}
if it contains a vertex labeled with the false literal $\False$;
in this case, $\psi$ is unsatisfiable.
A $\UPsys$-derivation $G$ on $\psi$ is {\it saturated} if
for each unary xor-clause $C$ such that $\psi \UPderiv C$
it holds that there is a vertex $v$ in $G$ with the label $\Labfunc(v) = C $.
Note that $\UPsys$ is not refutationally complete,
e.g.~there is no $\UPsys$-refutation of the unsatisfiable conjunction
$(a \X b) \land (a \X b \X \True)$.
However, it is ``eventually refutationally complete'' in the DPLL(XOR) setting:
if each variable in $\psi$ occurs in a unary clause in $\psi$,
then the empty clause is $\UPsys$-derivable iff $\psi$ is unsatisfiable;
thus when the CDCL SAT solver has assigned a value to all variables in
$\xorclauses$,
the $\UPsys$-module can check whether all the xor-clauses are satisfied.
\end{figwindow}

As explained in the previous section,
the CDCL part of the DPLL(XOR) framework requires
an implying or-clause for each xor-implied literal.
These can be computed by interpreting the \unitruleP{} and \unitruleN{} rules
as implications
\begin{eqnarray}
(x) \land C &\Implies& \simplification{C}{x}{\True}\label{Eq:UIP}\\
(x \oplus \True) \land C &\Implies& \simplification{C}{x}{\False}\label{Eq:UIN}
\end{eqnarray}
respectively,
and
recursively expanding the xor-implied literal with the left-hand side
conjunctions of these until a certain cut of the \UPsys-derivation is reached.
Formally,
a \emph{cut} of a \UPsys-derivation
$\igraph = \Tuple{ \vertices, \edges, \Labfunc}$ is
a partitioning $(\VA,\VB)$ of $\vertices$.
A \emph{cut for a non-input vertex $v \in \vertices$}
is a cut $(\VA,\VB)$ such that
(i) $v \in \VB$, and
(ii) if $v' \in \vertices$ is an input vertex and
there is a path from $v'$ to $v$, then $v' \in \VA$.
Now assume a \UPsys-derivation
$\igraph = \Tuple{\vertices, \edges, \Labfunc}$
for $\xorclauses \land {\AL_1 \land ... \land \AL_k}$.
For each non-input node $v$ in $\igraph$,
and
each cut $\Cut=\Pair{\VA}{\VB}$ of $\igraph$ for $v$,
the \emph{implicative explanation} of $v$ under $\Cut$
is the conjunction $\CExpl{v,\Cut} = \tmpf{v}{\Cut}$,
there $\tmpfname{\Cut}$ is recursively defined as follows:
\begin{itemize}
\item[E1]
  If $u$ is an input node with $\Lab{u} \in \xorclauses$,
  then $\tmpf{u}{\Cut} = \top$.
\item[E2]
  If $u$ is an input node with $\Lab{u} \in \Set{\AL_1,...,\AL_k}$,
  then $\tmpf{u}{\Cut} = \Lab{u}$.
\item[E3]
  If $u$ is a non-input node in $\VA$,
  then $\tmpf{u}{\Cut} = \Lab{u}$.
\item[E4]
  If $u$ is a non-input node in $\VB$,
  then
  $\tmpf{u}{\Cut} =
  {\tmpf{u_1}{\Cut} \land \tmpf{u_2}{\Cut}}$,
  where $u_1$ and $u_2$ are the source nodes of the two edges incoming to $u$.
\end{itemize}
Based on Eqs.~\eqref{Eq:UIP} and \eqref{Eq:UIN}, it is easy to see that
$\xorclauses \Conseq \CExpl{v,\Cut} \Implies \Lab{v}$ holds.
The implicative explanation $\CExpl{v,\Cut}$ can in fact be read
directly from the cut $\Cut$ %for $v$
as in~\cite{LJN:ECAI2010}:
$\CExpl{v,\Cut} = \bigwedge_{u \in \RS{\Cut}} \Lab{u}$,
where
$\RS{\Cut} = \{{u \in \VA} \mid {\Lab{u} \notin \xorclauses} \land {\exists u' \in \VB : \Tuple{u,u'} \in \edges}\}$
is the \emph{reason set} for $\Cut$.
%
%The \emph{reason set} for $\Cut$ is defined as
%$\RS{\Cut} = \Setdef{u \in \VA}{\exists u' \in \VB : \Tuple{u,u'} \in \edges}$
%and
%the implicative explanation $\CExpl{v,\Cut}$ can in fact be read
%directly from the cut $\Cut$ for $v$ as in~\cite{LJN:ECAI2010}:
%$\CExpl{v,\Cut} = \bigwedge_{{u \in \RS{\Cut}} \land {\Lab{u} \notin \xorclauses}} \Lab{u}$.
%
%Define that the \emph{reason set} for a cut $\Cut=(\VA,\VB)$ is
%$\RS{\Cut} = \Setdef{u \in \VA}{\exists u' \in \VB : \Tuple{u,u'} \in \edges}$.
%
A cut $\Cut$ is \emph{cnf-compatible} if
$\Lab{u}$ is a unary xor-clause
for each $u \in \RS{\Cut}$.
%either
%(i) $u$ is an input vertex, or
%
%A cut $\Cut$ is \emph{cnf-compatible} if
%for each $u \in \RS{\Cut}$ either
%(i) $u$ is an input vertex, or
%(ii) $\Lab{u}$ is a unary clause.
%
Thus if the cut $\Cut$ is cnf-compatible,
then $\CExpl{v,\Cut} \Implies \Lab{v}$ is the required or-clause
implying the xor-implied literal $\Lab{v}$.

\begin{example}
%As an example,
Consider again the \UPsys{}-derivation on
$\xorclauses \land (\neg a) \land (d) \land (\neg b)$ in
Fig.~\ref{Fig:UPDerivation}.
It has four cuts, 1--4, for the vertex $v_{12}$,
corresponding to the explanations 
${\neg a} \land d \land {\neg b}$,
$c \land d$,
$c \land (c \oplus e \oplus \top)$,
and
$e \land c$, respectively.
The non-cnf-compatible cut 3 cannot be used to give an implying or-clause
for the xor-implied literal $f$ but the others can;
the one corresponding to the cut 2 is $({\neg c} \lor {\neg d} \lor f)$.
\hfill$\clubsuit$
\end{example}

The UP-derivation bears an important similarity with ``traditional''
implication graph of a SAT solver where each vertex represents a variable
assignment: graph partitions are used to derive clausal explanations
for implied literals.
Different partitioning schemes for such implication
graphs have been studied in~\cite{DBLP:conf/iccad/ZhangMMM01},
and we can directly adopt some of them for our analysis.
A cut $\Cut = (\VA,\VB)$ for a non-input vertex $v$ is:
\begin{enumerate}
\item
  {\it closest cut} if $W$ is the cnf-compatible cut with the smallest possible
  $\VB$ part.
  Observe that each implying or-clause derived from these cuts
  is a clausification of a single xor-clause;
  e.g.,~$({\neg c} \lor {\neg e} \lor f)$ obtained from the cut 4 in Fig.~\ref{Fig:UPDerivation}.
\item
  \emph{first UIP cut} if $W$ is the %cnf-compatible
  cut with the largest possible $\VA$ part
  such that $\RS{\Cut}$ contains either
  the latest xor-assumption vertex or exactly one of its successors.
\item
  {\it furthest cut} if $\VB$ is maximal.
  Note that furthest cuts are also cnf-compatible as their reason sets
  consist only of xor-assumptions.
%\item
%  {\it furthest cut} if $\VB$ is maximal.
%  Note that furthest cuts are also cnf-compatible as their reason sets
%  consist only of original xor-clauses and xor-assumptions.
\end{enumerate}

In the implementation of the \UPsys-module,
we use a modified version of the 2-watched literals scheme first presented
in~\cite{DBLP:conf/dac/MoskewiczMZZM01} for or-clauses;
all but one of the \emph{variables} in an xor-clause need to be assigned before
the xor-clause implies the last one.
Thus it suffices to have two \emph{watched variables}. 
MoRsat~\cite{Chen:SAT2009} uses the same data structure for all
clauses and has $2 \times 2$ watched literals for xor-clauses.
Cryptominisat~\cite{SoosEtAl:SAT2009} uses a scheme similar to ours except
that it manipulates the polarities of literals in an xor-clause
while we take the polarities into account in the explanation phase.
Because of this implementation technique,
the implementation does not consider the non-unary non-input vertices
in \UPsys-derivations;
despite this, Thm.~\ref{Thm:PExpUrquhart} does hold also for the implemented
inference system.

\section{Parity explanations}
\label{Sec:ParityExplanations}

So far in this paper,
as well as in our previous works~\cite{LJN:ECAI2010,LJN:ICTAI2011},
we have used the inference rules %in Fig.~\ref{Fig:UPRules}
in an ``implicative way''.
For instance,
we have implicitly read the $\unitruleP$ rule as
\begin{quote}
  if the xor-clauses $(x)$ \emph{and} $C$ hold,
  \emph{then} $\simplification{C}{x}{\top}$ also holds.
\end{quote}
Similarly, 
the implicative explanation 
for an xor-implied literal $\IL$ labelling a non-input node $v$
under a cnf-compatible cut $\Cut$
has been defined to be a conjunction $\CExpl{v,\Cut}$ of literals
with $\xorclauses \Models {\CExpl{v,\Cut} \Rightarrow \IL}$ holding.
We now propose an alternative method allowing us to
compute a \emph{parity explanation} $\PExpl{v,\Cut}$
that is an xor-clause such that
\[
\xorclauses \Models {\PExpl{v,\Cut} \Iff \IL}\]
holds.
The variables in $\PExpl{v,\Cut}$ will always be a subset of the variables
in the implicative explanation $\CExpl{v,\Cut}$ computed on the same cut.

The key observation for computing parity explanations
is that the inference rules can
in fact also be read as \emph{equations} over xor-clauses
under some provisos.
As an example,
the $\unitruleP$ rule can be seen as the equation
$(x) \oplus C \oplus \top
 \Iff
 {\simplification{C}{x}{\top}}$
\emph{provided that}
(i) $x \in C$, and
(ii) $C$ is in normal form.
That is,
taking the exclusive-or of the two premises and the constant true
gives us the consequence clause of the rule.
The provisos are easy to fulfill:
(i) we have already assumed all xor-clauses to be in normal form,
and
(ii) applying the rule when $x \notin C$ is redundant and
can thus be disallowed.
%thus we can restrict the rule to cases when $x \in C$.
%
The reasoning is analogous for the \unitruleN{} rule
and
thus for \UPsys{} rules we have the equations:
\begin{eqnarray}
  %\unitruleP:
  (x) \oplus C \oplus \top 
  & \Iff &
  \simplification{C}{x}{\top}
  \label{Eq:PEP}
  \\
  %\unitruleN:
  (x \oplus \top) \oplus C \oplus \top
  & \Iff &
  \simplification{C}{x}{\bot}
  \label{Eq:PEN}
\end{eqnarray}

As all the \UPsys-rules can be interpreted as equations
of form ``left-premise xor right-premise xor true equals consequence'',
we can expand any xor-clause $C$ in a node of a \UPsys-derivation
by iteratively replacing it with the left hand side of
the corresponding equation.
As a result, we will get an xor-clause that is logically equivalent to $C$;
from this, we can eliminate the xor-clauses in $\xorclauses$ and
get an xor-clause $D$ such that $\xorclauses \Models {D \Iff C}$.
Formally,
assume a \UPsys-derivation
$\igraph = \Tuple{\vertices, \edges, \Labfunc}$
for $\xorclauses \land \AL_1 \land ... \land \AL_k$.
For each non-input node $v$ in $\igraph$,
and
each cut $\Cut=\Pair{\VA}{\VB}$ of $\igraph$ for $v$,
the \emph{parity explanation} of $v$ under $\Cut$
is $\PExpl{v,\Cut} = \tmpf{v}{\Cut}$,
there $\tmpfname{\Cut}$ is recursively defined as earlier
for $\CExpl{v,\Cut}$ except that the case ``E4'' is replaced by
\begin{itemize}
\item[E4]
  If $u$ is a non-input node in $\VB$,
  then
  $\tmpf{u}{\Cut} =
  {\tmpf{u_1}{\Cut} \X \tmpf{u_2}{\Cut} \X \top}$,
  where $u_1$ and $u_2$ are the source nodes of the two edges incoming to $u$.
\end{itemize}

We now illustrate parity explanations and
show that they can be smaller (in the sense of containing fewer variables)
than implicative explanations:
\begin{example}
  \label{ex:parityexp}
  Consider again the \UPsys-derivation
  given in Fig.~\ref{Fig:UPDerivation}.
  Take the cut 4 first;
  we get $\PExpl{v_{12},\Cut} = c \X e \X \top$.
  Now
  $\xorclauses \Models {\PExpl{v_{12},\Cut} \Iff \Lab{v_{12}}}$
  holds as $(c \X e \X \top) \Iff f$, i.e.~$c \X e \X f$,
  is an xor-clause in $\xorclauses$.
  Observe that the implicative explanation $c \land e$
  of $v_{12}$ under the cut is just one conjunct in the
  disjunctive normal form $(c \land e) \lor (\neg c \land \neg e)$
  of $c \X e \X \top$.

  On the other hand,
  under the cut 2 we get $\PExpl{v_{12},\Cut} = d$.
  Now
  $\xorclauses \Models {\PExpl{v_{12},\Cut} \Iff \Lab{v_{12}}}$
  as $d \Iff f$, i.e.~$d \X f \X \top$,
  is a linear combination of the xor-clauses in $\xorclauses$.
  Note that the implicative explanation for $v_{12}$ under the cut %2
  is $(c \land d)$,
  and
  no cnf-compatible cut for $v_{12}$ gives
  the implicative explanation $(d)$ for $v_{12}$. 
  \hfill$\clubsuit$
\end{example}
We observe that $\VarsOf{\PExpl{v,\Cut}} \subseteq \VarsOf{\CExpl{v,\Cut}}$
by comparing the definitions of $\CExpl{v,\Cut}$ and $\PExpl{v,\Cut}$.
The correctness of $\PExpl{v,\Cut}$, formalized in the following theorem,
can be established by induction and using Eqs.~\eqref{Eq:PEP} and \eqref{Eq:PEN}.
\begin{theorem}\label{Thm:PExpCorrectness}
  Let $\igraph = \Tuple{ \vertices, \edges, \Labfunc}$ be a \UPsys-derivation
  on $\xorclauses \wedge \AL_1 \wedge \dots \wedge \AL_k$,
  $v$ a node in it,
  and
  $\Cut = \Pair{\VA}{\VB}$ a cut for $v$.
  It holds that $\xorclauses \Models {\PExpl{v,\Cut} \Iff \Lab{v}}$.
\end{theorem}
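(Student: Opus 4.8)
The plan is to proceed by structural induction on the \UPsys-derivation, showing the stronger invariant that for every node $u$ of $\igraph$,
\[
\xorclauses \Models \tmpf{u}{\Cut} \Iff \Lab{u},
\]
where $\tmpfname{\Cut}$ is the recursive function defining $\PExpl{}$ with the xor-version of case E4. Since $v$ is in particular a node and $\PExpl{v,\Cut} = \tmpf{v}{\Cut}$, the theorem follows from this invariant. The induction is on the structure of the DAG: one orders the nodes topologically (possible since $\igraph$ is acyclic) and argues for each node assuming the claim for all its predecessors. It is worth noting that $\tmpfname{\Cut}$ only recurses into the predecessors of a node when that node lies in $\VB$; for nodes in $\VA$ the value is simply the label, so the recursion is well-founded and terminates at input nodes or at $\VA$-nodes.

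\textbf{Base cases.} There are three: (E1) an input node $u$ with $\Lab{u}\in\xorclauses$, where $\tmpf{u}{\Cut}=\top$ and we must check $\xorclauses \Models \top \Iff \Lab{u}$; this holds precisely because $\Lab{u}$ is one of the conjuncts of $\xorclauses$, so any $\pi\models\xorclauses$ satisfies $\Lab{u}$, and $\pi\models\top$ always, making the biconditional true. (E2) an input node with $\Lab{u}\in\Set{\AL_1,\dots,\AL_k}$, where $\tmpf{u}{\Cut}=\Lab{u}$ and the biconditional $\Lab{u}\Iff\Lab{u}$ is trivially a consequence of anything. (E3) a non-input node $u$ in $\VA$: again $\tmpf{u}{\Cut}=\Lab{u}$, so the biconditional is trivial. (One small point to mention: in case E2 the literal $\Lab{u}$ is an xor-assumption, not a member of $\xorclauses$, so we cannot conclude $\xorclauses\Models\Lab{u}$ — but we do not need to; we only need the biconditional with itself, which is a tautology.)

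\textbf{Inductive step.} The only remaining case is a non-input node $u$ in $\VB$, with incoming edges from $u_1,u_2$, so $\tmpf{u}{\Cut} = \tmpf{u_1}{\Cut} \X \tmpf{u_2}{\Cut} \X \top$. By the induction hypothesis, $\xorclauses \Models \tmpf{u_i}{\Cut} \Iff \Lab{u_i}$ for $i=1,2$. Fix any $\pi\models\xorclauses$. Then $\pi$ satisfies $\tmpf{u_i}{\Cut}$ iff it satisfies $\Lab{u_i}$, so $\pi$ satisfies $\tmpf{u_1}{\Cut}\X\tmpf{u_2}{\Cut}\X\top$ iff it satisfies $\Lab{u_1}\X\Lab{u_2}\X\top$ (parity of satisfied literals is preserved by swapping logically-equivalent-under-$\pi$ sub-xor-clauses — here one must be slightly careful that ``satisfies an xor-clause'' also requires totality, but since every xor-clause appearing in the derivation is in normal form over finitely many variables and $\pi$ ranges over total truth assignments, totality is automatic). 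Now $u$ was derived from $u_1,u_2$ by \unitruleP{} or \unitruleN{}, and Eqs.~\eqref{Eq:PEP}--\eqref{Eq:PEN} say exactly that in either case the consequence $\Lab{u}$ equals (as an xor-clause, hence logically) $\Lab{u_1}\X\Lab{u_2}\X\top$ — for \unitruleP{} with premises $(x)$ and $C$ the consequence is $\simplification{C}{x}{\top} \Iff (x)\X C\X\top$, and symmetrically for \unitruleN{}. Hence $\pi \models \Lab{u_1}\X\Lab{u_2}\X\top$ iff $\pi\models\Lab{u}$, and chaining the two equivalences gives $\pi\models\tmpf{u}{\Cut}$ iff $\pi\models\Lab{u}$, as required.

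\textbf{Main obstacle.} The reasoning is essentially bookkeeping, so the only delicate point is the interplay between the semantic definition of $\models$ for xor-clauses (which demands totality of $\pi$ on the clause's variables and then counts parity) and the purely syntactic xor-algebra identities in Eqs.~\eqref{Eq:PEP}--\eqref{Eq:PEN}. I would isolate a small lemma stating that for xor-clauses $A,B$ in normal form and a total truth assignment $\pi$, we have $\pi \models A \X B \X \top$ iff ($\pi\models A$) $\Leftrightarrow$ ($\pi\models B$), and similarly that normal-form rewriting preserves the satisfying assignments; with that in hand the inductive step is immediate. Everything else — the topological ordering, the three base cases, the well-foundedness of the recursion — is routine.
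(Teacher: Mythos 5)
Your proof is correct and follows essentially the same route as the paper's own argument: structural induction over the derivation DAG establishing $\xorclauses \Models \tmpf{u}{\Cut} \Iff \Lab{u}$ for every node, with the same three base cases and the inductive step resting on the validity of Eqs.~\eqref{Eq:PEP} and \eqref{Eq:PEN}. The extra care you take about totality and the well-foundedness of the recursion is fine but not a departure from the paper's proof.
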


Recall that the CNF-part solver requires an implying or-clause $C$ for each
xor-implied literal, forcing the value of the literal by unit propagation.
A parity explanation
can be used to get such implying or-clause
by taking the implicative explanation as a basis and
omitting the literals on variables not occurring in the parity explanation:
\begin{theorem}\label{Thm:PExpDisjunction}
  Let $\igraph = \Tuple{ \vertices, \edges, \Labfunc}$ be a \UPsys-derivation
  on $\xorclauses \land {\AL_1 \land \dots \land \AL_k}$,
  $v$ a node with $\Lab{v}=\IL$ in it,
  and
  $\Cut = \Pair{\VA}{\VB}$ a cnf-compatible cut for $v$.
  Then
  $\xorclauses \Models {(\bigwedge_{u \in S} \Lab{u}) \Implies \IL}$,
  where $S = \Setdef{u \in \RS{\Cut}}{\VarsOf{\Lab{u}} \subseteq \VarsOf{\PExpl{v,\Cut}}}$.
  %$\xorclauses \Models {(\bigwedge_{{u \in \RS{\Cut}} \land {\Lab{u} \notin \xorclauses} \land {\VarsOf{\Lab{u}} \subseteq \VarsOf{\PExpl{v,\Cut}}}} \Lab{u}) \Implies \IL}$.
\end{theorem}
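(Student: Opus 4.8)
The goal is to show that the sub-conjunction $\bigwedge_{u\in S}\Lab{u}$ of the implicative explanation $\CExpl{v,\Cut}$ still implies $\IL$ modulo $\xorclauses$, where $S$ keeps only those reason vertices whose (unary) clause is over a variable appearing in $\PExpl{v,\Cut}$. The natural route is to combine the two facts we already have: (i) by Theorem~\ref{Thm:PExpCorrectness}, $\xorclauses\Models{\PExpl{v,\Cut}\Iff\IL}$, so it suffices to show $\xorclauses\Models{(\bigwedge_{u\in S}\Lab{u})\Implies\PExpl{v,\Cut}}$; and (ii) the parity explanation $\PExpl{v,\Cut}$ is, by construction, a linear combination (xor) of the clauses $\Lab{u}$ for $u\in\RS{\Cut}$ together with the constant $\top$ and some members of $\xorclauses$. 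I would first make fact (ii) precise: unfolding the recursive definition of $\tmpfname{\Cut}$ with the new case E4 and using that xor is associative/commutative, one gets $\PExpl{v,\Cut}=\bigoplus_{u\in\RS{\Cut}}\Lab{u}\;\oplus\;(\text{some xor-clauses of }\xorclauses)\;\oplus\;(\text{a constant }\top\text{ term})$, exactly as in the informal discussion preceding Theorem~\ref{Thm:PExpCorrectness}.

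**Key steps.** First I would fix a truth assignment $\truthA$ with $\truthA\Models\xorclauses$ and $\truthA\Models\Lab{u}$ for every $u\in S$, and aim to prove $\truthA\Models\PExpl{v,\Cut}$; this together with Theorem~\ref{Thm:PExpCorrectness} gives $\truthA\Models\IL$, which is what the theorem asserts (the implication holds for all $\truthA\Models\xorclauses$). Second, I would use the linear-combination representation from step (ii): write $\PExpl{v,\Cut}=\bigl(\bigoplus_{u\in\RS{\Cut}}\Lab{u}\bigr)\oplus L$ where $L$ collects the $\xorclauses$-terms and the constant. Since $\truthA\Models\xorclauses$, every $\xorclauses$-term contributes its ``satisfied'' parity, so modulo $\xorclauses$ we have $\truthA\Models\PExpl{v,\Cut}$ iff $\truthA$ makes an appropriate parity of $\bigoplus_{u\in\RS{\Cut}}\Lab{u}$ true. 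Third — and this is the crux — I would argue that the reason vertices $u\in\RS{\Cut}\setminus S$ contribute nothing: each such $u$ has $\Lab{u}$ unary (cnf-compatibility) on a variable $x_u\notin\VarsOf{\PExpl{v,\Cut}}$, and since $\VarsOf{\PExpl{v,\Cut}}\subseteq\VarsOf{\CExpl{v,\Cut}}$ and the cancellation that removed $x_u$ happened purely among the E4 xor-operations, the net coefficient of $x_u$ in the full linear combination is $0$; hence the literal/constant contributed by $\Lab{u}$ has already been absorbed and does not affect the parity value. Therefore the truth value of $\PExpl{v,\Cut}$ under $\truthA$ depends only on $\{\Lab{u}\mid u\in S\}$ and the $\xorclauses$-terms, all of which $\truthA$ satisfies with the ``right'' parity, so $\truthA\Models\PExpl{v,\Cut}$.

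**Main obstacle.** The delicate point is the third step: making rigorous the claim that a reason vertex whose variable was cancelled out of $\PExpl{v,\Cut}$ genuinely makes no contribution. Subtleties are that a variable of $\Lab{u}$ could reappear via a different vertex, or that $\Lab{u}=x_u\oplus\top$ contributes a $\top$ term even though $x_u$ cancels. I would handle this by working in the $\mathbb{F}_2$-vector-space view: represent every xor-clause as a vector over $\vars\cup\{\top\}$, observe that $\PExpl{v,\Cut}$ equals the $\mathbb{F}_2$-sum of the vectors of the $\Lab{u}$ ($u\in\RS{\Cut}$), the vectors of the used $\xorclauses$-clauses, and $\top$; restricting this identity to the $x_u$-coordinate for $u\notin S$ shows that coordinate is $0$ on the right, and — using that distinct reason vertices in a cnf-compatible cut carry distinct variables and that $\xorclauses$-clauses containing $x_u$ would force $x_u\in\VarsOf{\PExpl{v,\Cut}}$ unless they too cancel in pairs — concluding that replacing $\RS{\Cut}$ by $S$ changes the $\mathbb{F}_2$-sum only by an element that vanishes modulo $\xorclauses$. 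Once this linear-algebra bookkeeping is set up, substituting $\truthA$ and reading off parities is routine, and Theorem~\ref{Thm:PExpCorrectness} closes the argument.
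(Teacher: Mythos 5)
Your overall strategy diverges from the paper's: you try to establish the intermediate entailment $\xorclauses \Models {(\bigwedge_{u \in S}\Lab{u}) \Implies \PExpl{v,\Cut}}$ by purely syntactic $\mathbb{F}_2$-bookkeeping over the expansion, whereas the paper argues semantically: since the unary clauses $\Lab{u}$, $u \in S$, fix every variable of $\PExpl{v,\Cut}$, modulo $\xorclauses$ the conjunction implies either $\IL$ or $\neg\IL$, and the second alternative is excluded using the correctness of the \emph{implicative} explanation of the full reason set, $\xorclauses \Models {\CExpl{v,\Cut} \Implies \IL}$, together with the satisfiability case split on $\xorclauses$. This difference is not cosmetic: your argument never invokes $\CExpl{v,\Cut}$ at all, and that is exactly where the gap lies. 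First, a smaller issue: the decomposition you state in step (ii) is not the right identity --- by case E1 the $\xorclauses$-input leaves contribute $\top$, not their labels, and only the reason vertices reached along an odd number of expansion paths survive, so $\PExpl{v,\Cut}$ is an xor of \emph{some} of the $\Lab{u}$ with $u \in \RS{\Cut}$ plus a constant, not of all of them together with clauses of $\xorclauses$. This is fixable, but the crux step is not, as stated. Cnf-compatibility only says each reason label is unary; it does \emph{not} guarantee that ``distinct reason vertices carry distinct variables'', which your repair of the delicate case explicitly assumes. Two distinct reason vertices may carry $x$ and $x \oplus \top$ (e.g.\ with $(a \oplus x)$ and $(b \oplus x \oplus \top)$ in $\xorclauses$ and xor-assumptions $a,b$, the derivation can contain nodes labelled both $x$ and $\neg x$, and both can feed the $\VB$ side). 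Then $x$ cancels out of $\PExpl{v,\Cut}$, both vertices are dropped from $S$, yet their joint contribution is a constant $\top$ flip --- it is not ``nothing'', and whether the $S$-literals make $\PExpl{v,\Cut}$ true genuinely depends on information about the discarded literals that $\xorclauses \land \bigwedge_{u\in S}\Lab{u}$ does not carry. In such a configuration the entailment you aim for in your first step cannot be derived by coefficient bookkeeping at all, so no amount of linear algebra closes the argument.

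What is missing, concretely, is the semantic input that the paper uses: that the \emph{whole} reason-set conjunction $\bigwedge_{u\in\RS{\Cut}}\Lab{u}$ implies $\IL$ modulo $\xorclauses$ (the already established property of implicative explanations), so that if the sub-conjunction over $S$ implied $\neg\IL$ one would obtain a contradiction with the satisfiability assumption on $\xorclauses$. Your observation that the $S$-literals fix all variables of $\PExpl{v,\Cut}$ (hence, via Theorem~\ref{Thm:PExpCorrectness}, force a definite value on $\IL$ modulo $\xorclauses$) is the right first half and coincides with the paper's; but the second half must rule out the ``wrong'' parity by appealing to $\CExpl{v,\Cut}$ and $S \subseteq \RS{\Cut}$, not by arguing that the discarded reason vertices are algebraically inert --- they need not be. As written, the proposal therefore has a genuine gap at its central step.
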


\newcommand{\bpath}{b-path}
\newcommand{\Succ}[1]{\mathit{succ}(#1)}
\newcommand{\NofSuccPaths}[2]{\mathit{\#succ\mbox{-}b\mbox{-}paths}(#1,#2)}

Observing that only expressions of the type $\tmpf{u}{\Cut}$ occurring an odd
number of times in the expression $\tmpf{v}{\Cut}$ remain in $\PExpl{v,\Cut}$,
we can derive a more efficient graph traversal method for
computing parity explanations.
That is, when computing a parity explanation for a node,
we traverse the derivation backwards from it in a breadth-first order.
If we come to a node $u$ and note that its traversal is requested because
an even number of its successors have been traversed,
then we don't need to traverse $u$ further or
include $\Lab{u}$ in the explanation if $u$ was on the ``reason side'' $\VA$ of the cut.
\begin{example} 
  Consider again the \UPsys{}-derivation in Fig.~\ref{Fig:UPDerivation} and the
  cnf-compatible cut 1 for $v_{12}$.
  When we traverse the derivation backwards,
  we observe that the node $v_9$ has an even number of traversed successors;
  we thus don't traverse it (and consequently neither $v_8$, $v_5$, $v_4$ or $v_1$).
  On the other hand, $v_6$ has an odd number of traversed successors and
  it is included when computing $\PExpl{v_{12},\Cut}$.
  Thus we get $\PExpl{v_{12},\Cut} = \Labfunc(v_6) = (d)$ and
  the implying or-clause for $f$ is $d \Implies f$, i.e.~$({\neg d} \lor f)$.
  \hfill$\clubsuit$
\end{example}

Although parity explanations can be computed quite fast using graph traversal
as explained above, this can still be computationally prohibitive on
``xor-intensive'' instances because a single CNF-level conflict analysis may
require that implying or-clauses for hundreds of xor-implied literals are
computed.
In our current implementation, we compute the closest cnf-compatible cut (for
which parity explanations are very fast to compute but equal to
implicative explanations and produce clausifications of single
xor-clauses as implying or-clauses) for an xor-implied literal $\IL$
when an explanation is needed in the regular conflict analysis.
The computationally more expensive furthest cut is used if an explanation is
needed again in the conflict-clause minimization phase of minisat.

%--------------------------------------------------------------------------
%
%--------------------------------------------------------------------------
%
%--------------------------------------------------------------------------
\section{Resolution cannot polynomially simulate parity explanations}
\label{Sec:Resolution}

\newcommand{\charge}{c}
\newcommand{\var}{\operatorname{var}}
\newcommand{\urquhartgraph}{parity graph}
\newcommand{\urquhartclauses}{\operatorname{clauses}}
\newcommand{\urquhartxorclause}{\alpha}
\newcommand{\urquhartxorclauses}{\operatorname{xorclauses}}

Intuitively,
as parity explanations can contain fewer variables than implicative explanations,
the implying or-clauses derived from them should help pruning the remaining search space of the CDCL solver better.
We now show that,
in theory,
parity explanations can indeed be very effective as they can allow
small refutations for some formula classes whose
CNF translations do not have polynomial size resolution proofs.
To do this,
we use the hard formulas defined in~\cite{DBLP:journals/jacm/Urquhart87};
these are derived from a class of graphs which
we will refer to as ``parity graphs''.
A {\it \urquhartgraph{}} is an undirected, connected, edge-labeled graph
$ G = \Tuple{V,E} $
where each node $v \in V$ is labeled with a {\it charge} $\charge(v) \in
\set{\bot,\top}$ and each edge $\Tuple{v,u} \in E$ is labeled with a distinct
variable.
The {\it total charge} $\charge(G) = \bigoplus_{v \in V} \charge(v)$
of an \urquhartgraph{} $G$ is the parity of all node charges.
Given a node $v$,
define the xor-clause $\urquhartxorclause(v) = {q_1 \X \dots \X q_n \X \charge(v) \X \True}$,
where $q_1, \dots, q_n$ are the variables used as labels in the edges connected
to $v$,
and $\urquhartxorclauses(G) = \bigwedge_{v \in V} \urquhartxorclause(v)$.
For an xor-clause $C$ over $n$ variables,
let $\cnf{C}$ denote the equivalent CNF formula,
i.e.~the conjunction of $2^{n-1}$ clauses with $n$ literals in each.
Define $\urquhartclauses(G) = \bigwedge_{v \in V} \cnf{\urquhartxorclause(v)}$.

As proven in Lemma 4.1 in~\cite{DBLP:journals/jacm/Urquhart87},
$\urquhartxorclauses(G)$ and $\urquhartclauses(G)$
are unsatisfiable if and only if $\charge(G) = \True$.
The unsatisfiable formulas derived from parity graphs can be very hard
for resolution:
there is an infinite sequence $G_1,G_2,\dots$ of degree-bounded
\urquhartgraph{}s such that $\charge(G_i) = \True$ for each $i$
and
the following holds:
\begin{lemma}[Thm.~5.7 of~\cite{DBLP:journals/jacm/Urquhart87}]
  There is a constant $c > 1$ such that for sufficiently large $m$,
  any resolution refutation of $\urquhartclauses(G_m)$ contains
  $c^n$ distinct clauses, where $\urquhartclauses(G_m)$ is of length ${\cal O}(n)$, $n = m^2$.
\end{lemma}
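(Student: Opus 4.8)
Since the statement is Urquhart's classical exponential lower bound, the paper can of course simply cite~\cite{DBLP:journals/jacm/Urquhart87}; but let me sketch the argument I would reconstruct, phrased in the now-standard size--width style of Ben-Sasson and Wigderson, since it isolates exactly which properties of the graphs $G_m$ are used. The only facts I need about the family $G_1,G_2,\dots$ are those the construction in~\cite{DBLP:journals/jacm/Urquhart87} supplies: every $G_m$ has bounded degree $\Delta=O(1)$; there is a constant $\beta>0$ such that every vertex set $U$ of $G_m$ has edge boundary $|\partial U|\ge\beta\min(|U|,|V|-|U|)$, where $\partial U$ is the set of edges with exactly one endpoint in $U$; and $|V|=\Theta(n)$, $|E|=\Theta(n)$ with $n=m^2$. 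Since $\charge(G_m)=\True$, the cited Lemma~4.1 makes $\urquhartclauses(G_m)$ unsatisfiable, so a resolution refutation exists and the task is to lower-bound its number of distinct clauses.

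First I would attach to every or-clause $C$ the measure $\mu(C)$ equal to the least cardinality of a vertex set $U\subseteq V$ with $\bigwedge_{v\in U}\urquhartxorclause(v)\Models C$. Three observations drive the proof. (i)~Every axiom of $\urquhartclauses(G_m)$ is a clause of $\cnf{\urquhartxorclause(v)}$ for some $v$, hence is implied by $\urquhartxorclause(v)$ alone, so $\mu\le1$ on the leaves. (ii)~$\mu(\bot)=|V|$: the linear system $\Set{\urquhartxorclause(v)}_{v\in U}$ over $\mathbb{F}_2$ is inconsistent iff some $W\subseteq U$ has $\bigoplus_{v\in W}\urquhartxorclause(v)$ equal to the contradiction; all edge-variables cancel in such a sum only when $W$ is a union of connected components of $G_m$, hence by connectivity $W\in\Set{\emptyset,V}$, and the constant part is $\True$ only for $W=V$ (this is Lemma~4.1); so $\bigwedge_{v\in U}\urquhartxorclause(v)$ is unsatisfiable iff $U=V$. (iii)~$\mu$ is subadditive along a resolution step: if $C$ is resolved from $C_1,C_2$ with witnessing sets $U_1,U_2$, then $U_1\cup U_2$ implies both premises and hence $C$. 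Since $\mu$ rises from $\le1$ to $|V|$ along any refutation and at most doubles per step, there is a clause $C^\star$ in it with $|V|/3\le\mu(C^\star)\le2|V|/3$.

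Next I would convert ``medium $\mu$'' into ``large width''. Fix a minimal $U$ witnessing $\mu(C^\star)$. For every boundary edge $e=\Tuple{u,w}\in\partial U$ with $u\in U$, $w\notin U$, the variable $q_e$ labelling $e$ must occur in $C^\star$: edge labels are distinct, so $q_e$ occurs in no $\urquhartxorclause(v)$ with $v\in U\setminus\Set{u}$, and if it were also absent from $C^\star$ we could take an assignment satisfying $\bigwedge_{v\in U\setminus\Set{u}}\urquhartxorclause(v)$ but falsifying $C^\star$ (one exists by minimality of $U$) and flip $q_e$ to repair $\urquhartxorclause(u)$ without touching anything else, contradicting $\bigwedge_{v\in U}\urquhartxorclause(v)\Models C^\star$. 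Hence $C^\star$ has width at least $|\partial U|\ge\beta\min(|U|,|V|-|U|)\ge(\beta/3)|V|=\Omega(n)$. Finally I would invoke the Ben-Sasson--Wigderson size--width relation: a resolution refutation of an unsatisfiable CNF in $N$ variables whose axioms have width $\le w_0$ and that is forced to contain a clause of width $\ge w$ has size $\exp\!\bigl(\Omega((w-w_0)^2/N)\bigr)$. Here $w=\Omega(n)$, $w_0\le\Delta=O(1)$, and $N=|E|=\Theta(n)$, which yields a size bound of $\exp(\Omega(n))$, i.e.\ $c^n$ for a suitable $c>1$. (Urquhart's own proof replaces this last step by a direct ``bottleneck'' argument: it exhibits exponentially many \emph{critical} assignments and shows that each refutation must contain, for each of them, a distinct complex clause, reaching the same $c^n$.)

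The two $\mu$-paragraphs are graph-independent and robust, so the real obstacle is everything behind ``the construction in~\cite{DBLP:journals/jacm/Urquhart87}'': showing that the explicit family $G_m$ is a bounded-degree graph with constant edge-expansion on sets of size $\approx|V|/3$, with $|V|=\Theta(m^2)$. That expansion estimate, not the resolution combinatorics, is the technically delicate, construction-specific heart of the result, and it is exactly what Urquhart's paper establishes; a secondary point requiring care is pushing the width lower bound all the way down to an honest $c^n$ rather than a merely super-polynomial bound.
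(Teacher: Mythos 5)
The paper does not prove this lemma at all: it is imported verbatim as Theorem~5.7 of Urquhart's paper, and the only thing the present paper relies on is the statement itself (the appendix proves Theorems~\ref{Thm:PExpCorrectness}--\ref{Thm:PExpUrquhart}, not this lemma). So there is no in-paper proof to compare against, and simply citing \cite{DBLP:journals/jacm/Urquhart87} is exactly what is expected here. Your reconstruction is a legitimate alternative route and is essentially sound as a sketch: the measure $\mu(C)$ (fewest vertex constraints implying $C$), its value $1$ on axioms and $|V|$ on the empty clause via connectivity and total charge, subadditivity giving a clause of intermediate complexity, the boundary-variable argument giving width $\ge \beta|V|/3$ from a minimal witnessing set, and the Ben-Sasson--Wigderson size--width tradeoff with axiom width $O(1)$ and $\Theta(n)$ variables yielding $\exp(\Omega(n))$ distinct clauses. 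This is the now-standard modern proof of Tseitin lower bounds and is genuinely different from (and anachronistic relative to) Urquhart's original argument, which is a direct bottleneck-counting proof over critical truth assignments rather than a width-based one; what the modern route buys is modularity (the resolution combinatorics is generic, and all graph-specific work is isolated in the bounded-degree expansion estimate), while Urquhart's route is what the cited theorem number actually refers to and avoids the later size--width machinery. You correctly flag that the expansion property of the explicit family $G_m$, with $|V|=\Theta(m^2)$, is the construction-specific heart that must be taken from Urquhart; with that treated as a citation, your sketch is fine, and the minor hand-waves (the exact form of the intermediate-$\mu$ clause argument, the parity of the constant terms in the sum over $W=V$) are harmless at this level of detail.
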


We now present our key result on parity explanations:
for \emph{any} parity graph $G$ with $\charge(G_i) = \True$,
the formula $\urquhartxorclauses(G)$ can be refuted with a \emph{single}
parity explanation after a number of xor-assumptions have been made:
\begin{theorem}\label{Thm:PExpUrquhart}
  Let $G = \Tuple{V,E}$ be a %3-regular
  \urquhartgraph{} such that $c(G) = \True$.
  There is a \UPsys-refutation for $\urquhartxorclauses(G) \wedge q_1 \dots \wedge q_k $ for some xor-assumptions $ q_1,\dots,q_k$,
  a node $v$ with $\Lab{v}=\bot$ in it,
  and a cut $\Cut = \Tuple{\VA,\VB} $ for $v$
  such that $\PExpl{v,\Cut} = \top$.
  Thus $\urquhartxorclauses(G) \Models (\top \Iff \bot)$,
  showing $\urquhartxorclauses(G)$ unsatisfiable.
\end{theorem}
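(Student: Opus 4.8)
The plan is to build a $\UPsys$-derivation that mimics, step by step, the
linear-algebra argument behind Lemma~4.1 of~\cite{DBLP:journals/jacm/Urquhart87}:
summing all the vertex xor-clauses $\urquhartxorclause(v)$ cancels every edge
variable (each edge touches exactly two vertices) and leaves
$\bigoplus_{v\in V}\charge(v)\X\True = \charge(G)\X\True = \bot$, witnessing
unsatisfiability. The key point is that this cancellation can be realised
\emph{entirely} by the $\unitruleP$/$\unitruleN$ inference rules once enough
variables have been fixed by xor-assumptions, and that the resulting
parity explanation, read off via rule~E4, reproduces exactly the xor-sum
$\bigoplus_v \urquhartxorclause(v)\X(\text{\#inputs used})\cdot\top$, which
equals $\top$ because all edge variables cancel and the vertex charges and the
correcting $\top$'s sum to $\charge(G)\X\True=\bot$, i.e. the explanation
xor-clause is empty.

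First I would pick a spanning tree (or simply any enumeration) of $G$ and, for
each edge variable $q_i$, add $q_i$ as an xor-assumption, so that
$k = \Card{E}$; with all edge variables assigned, each $\urquhartxorclause(v)$
becomes a unary xor-clause under repeated $\unitrules$ applications, and in
particular these derivations are saturated enough that $\bot$ becomes
$\UPsys$-derivable (by the ``eventually refutationally complete'' remark, since
every variable of $\psi$ now occurs in a unary clause). Actually a cleaner
route, and the one I would take, is to assign only a carefully chosen subset of
the edge variables: process the vertices in a fixed order $v_1,\dots,v_{|V|}$,
and at vertex $v_j$ resolve its xor-clause against the running partial sum using
$\unitrules$ on the variables already eliminated, assigning as xor-assumptions
just those few edge variables needed to keep each intermediate clause short.
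Either way, I would then exhibit an explicit derivation graph $G'$ whose
``spine'' is the sequence of partial sums $S_1 = \urquhartxorclause(v_1)$,
$S_{j} = S_{j-1}$ combined (through a constant number of $\unitrules$ steps)
with $\urquhartxorclause(v_j)$, ending at a node $v$ labelled $\bot$.

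Next I would choose the cut $\Cut=\Tuple{\VA,\VB}$ that puts all the input
vertices (the $\urquhartxorclause(v)$ nodes and the xor-assumption nodes) in
$\VA$ and everything derived — the whole spine, including $v$ — in $\VB$; this is
a legal cut for $v$ since every input vertex lies in $\VA$. Then I would compute
$\PExpl{v,\Cut}$ by unwinding E1--E4: inputs in $\xorclauses$ contribute $\top$
(E1), so by E4 the explanation is the xor of all the $\urquhartxorclause(v)$
contributions (each replaced by $\top$) xored with the contributions of the
xor-assumption leaves (each contributing its literal $q_i$, by E2), plus the
accumulated $\top$'s from E4. The crucial computation is that every edge
variable $q_i$ that was assigned appears in the E4-expansion an \emph{even}
number of times — precisely because $q_i$ labels an edge with two endpoints and
is eliminated once against each endpoint's clause along the spine — so all $q_i$
cancel, and what remains is $\bigoplus_v \charge(v)$ together with the constant
$\top$'s, which by $\charge(G)=\True$ sums to the empty xor-clause, i.e.
$\PExpl{v,\Cut}=\top$. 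Finally, Theorem~\ref{Thm:PExpCorrectness} gives
$\urquhartxorclauses(G)\Models(\PExpl{v,\Cut}\Iff\Lab{v})$, i.e.
$\urquhartxorclauses(G)\Models(\top\Iff\bot)$, hence unsatisfiability.

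The main obstacle I anticipate is bookkeeping the parity of occurrences of each
edge variable in the E4-expansion: one has to argue carefully that, in the
derivation graph actually constructed, each assigned $q_i$-assumption leaf feeds
into exactly two E4-merges along paths to $v$ (once per endpoint of the edge),
and that no spurious extra copies appear from the internal $\unitrules$ bookkeeping
steps. Getting a derivation whose structure makes this parity count transparent
— e.g. by keeping the ``spine'' strictly linear so that each input is consumed
exactly once in the intended place — is where the real work lies; once the graph
is set up so cleanly, the cancellation and the conclusion
$\PExpl{v,\Cut}=\top$ follow by a direct induction on $j$ over the spine.
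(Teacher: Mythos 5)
Your high-level strategy is the same as the paper's (realize the xor-sum $\bigoplus_{v}\alpha(v)=\bot$ of all vertex clauses inside a \UPsys{}-derivation after making suitable xor-assumptions, then take a cut whose reason set contains only assumption nodes and argue that every assumption feeds into the expansion an even number of times, so that $\PExpl{v,\Cut}=\top$), but both of your concrete constructions have a genuine gap, and the missing ingredient is precisely the paper's spanning-tree argument. First, the variant ``assume \emph{all} edge variables and invoke eventual refutational completeness'' does not give the refutation you need: with every edge variable fixed, unit propagation derives $\bot$ already from a \emph{single} vertex clause whose parity the assumptions violate; that refutation never touches the other vertex clauses, and its parity explanation under your cut is (up to constants) just that one clause restricted to the assumptions around its vertex --- not $\top$. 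Second, your preferred ``spine'' over an arbitrary fixed vertex order $v_1,\dots,v_{|V|}$ is not a sequence of legal \UPsys{} steps: \unitruleP{}/\unitruleN{} require one premise to be \emph{unary}, so to absorb $\alpha(v_{j})$ into the running partial sum, the partial sum must first have been reduced to a single variable that moreover occurs in $\alpha(v_j)$. An arbitrary enumeration does not provide this (it would essentially require a Hamiltonian-path-like order), and patching it by ``assigning just those few edge variables needed'' at branch points destroys the very invariant your cancellation argument rests on, namely that each assumed edge variable is consumed exactly once per endpoint; you explicitly leave this bookkeeping as ``where the real work lies,'' but that bookkeeping \emph{is} the proof.

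The paper closes exactly this gap as follows: fix a spanning tree $E'\subseteq E$ and take as xor-assumptions only the variables on the co-tree edges $E\setminus E'$. Each such assumption is then used by \unitruleP{} exactly twice, once in each of its two endpoint clauses. After these substitutions every leaf of the tree carries a unary clause in its unique tree-edge variable, so one repeatedly picks a leaf, propagates that unary value into the parent's clause, and deletes the leaf; since the total charge is $\top$, the last vertex yields a node $v$ with $\Lab{v}=\bot$. Under the furthest cut the reason set consists only of the assumption nodes, each of which has exactly two paths to $v$, so all assumptions cancel; the remaining point --- which your sketch glosses over when it says the explanation ``sums to the empty xor-clause, i.e.\ $\top$'' (the empty xor-clause is $\bot$, not $\top$) --- is to check that the accumulated constants come out to $\top$ rather than $\bot$ (the paper notes the number of non-input nodes is odd), since $\xorclauses\Models(\bot\Iff\bot)$ would be vacuous while $\xorclauses\Models(\top\Iff\bot)$ is what establishes unsatisfiability. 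With the spanning-tree construction both the legality of every inference step and the two-paths-per-assumption parity count become immediate, which is what your proposal is missing.
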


By recalling that CDCL SAT solvers are equally powerful to
resolution~\cite{PipatsrisawatDarwiche:AI2011},
and
that unit propagation on xor-clauses can be efficiently simulated by unit propagation their CNF translation,
we get the following:
\begin{corollary}
  There are families of unsatisfiable cnf-xor formulas for which
  DPLL(XOR) using $\UPsys{}$-module
  (i) has polynomial sized proofs if parity explanations are allowed,
  but
  (ii) does not have such %polynomial size proofs
  if the ``classic'' implicative explanations are used.
\end{corollary}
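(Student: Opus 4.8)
The plan is to instantiate the corollary with the family of Urquhart parity-graph formulas $\urquhartxorclauses(G_1), \urquhartxorclauses(G_2), \dots$ arising from the degree-bounded sequence $G_1, G_2, \dots$ with $\charge(G_m) = \top$ that is fixed just before the preceding lemma. I would prove the two halves separately: the positive half~(i) follows directly from Theorem~\ref{Thm:PExpUrquhart}, while the negative half~(ii) combines a simulation argument with the resolution equivalence of CDCL~\cite{PipatsrisawatDarwiche:AI2011} and the exponential resolution lower bound for these formulas.

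For part~(i), I would exhibit one good run of DPLL(XOR) using the \UPsys-module with parity explanations enabled. By Theorem~\ref{Thm:PExpUrquhart} there are xor-assumptions $q_1, \dots, q_k$, with $k$ at most the number of edge-variables and hence polynomial in the formula size (recall $\urquhartclauses(G_m)$ has length $\mathcal{O}(n)$), such that once the CDCL search has decided $q_1, \dots, q_k$ and communicated them to the module, the module derives a conflict node $v$ with $\Lab{v} = \bot$ whose parity explanation on a suitable cut equals $\top$. Since a parity explanation yields an implying or-clause over only the variables it mentions (Theorem~\ref{Thm:PExpDisjunction}) and $\VarsOf{\top} = \emptyset$, the or-clause returned by \textsc{explain} for this xor-conflict contains no literals, i.e.\ it is the empty clause $\bot$. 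Handing this empty conflict clause to the Boolean conflict analysis on lines~15--17 of Fig.~\ref{fig:DPLLXOR} forces the solver to return ``unsatisfiable'' at once. The run thus consists of $k$ decisions, a single polynomial-size \UPsys-derivation, and one conflict analysis, giving a proof of polynomial size; verifying that the cut of Theorem~\ref{Thm:PExpUrquhart} is cnf-compatible is a routine final check.

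For part~(ii), I would argue by contradiction: suppose DPLL(XOR) with the \UPsys-module restricted to the classic \emph{implicative} explanations admitted polynomial-size proofs of $\urquhartxorclauses(G_m)$. The central observation is that every implying or-clause produced by an implicative explanation is a consequence of $\xorclauses$ that is \emph{resolution-derivable} by a short chain from the CNF translation $\urquhartclauses(G_m) = \bigwedge_{v} \cnf{\urquhartxorclause(v)}$: because the graphs are degree-bounded, each $\cnf{\urquhartxorclause(v)}$ is a constant number of constant-width or-clauses, so the xor-level unit propagation underlying every implicative explanation is mirrored step-for-step by ordinary CNF unit propagation on $\urquhartclauses(G_m)$ with only constant overhead per step. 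Hence the assumed polynomial DPLL(XOR) run --- its decisions, implied literals, learned clauses, and backjumps --- replays as a polynomial-size CDCL execution on the pure CNF formula $\urquhartclauses(G_m)$. By the equivalence of CDCL and resolution~\cite{PipatsrisawatDarwiche:AI2011}, this gives a polynomial-size resolution refutation of $\urquhartclauses(G_m)$, contradicting the $c^n$ lower bound of the preceding lemma (Thm.~5.7 of~\cite{DBLP:journals/jacm/Urquhart87}). This contradiction establishes~(ii).

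I expect the main obstacle to lie in making the simulation of part~(ii) watertight, namely certifying that the entire implicative-explanation proof maps to a resolution refutation of $\urquhartclauses(G_m)$ with only polynomial blow-up. Two points need care: first, that each implying or-clause is not merely \emph{entailed} by but short-resolution-derivable from the constant-width clausifications --- which degree-boundedness secures --- and second, that the bookkeeping of decisions, implied literals, and learned or-clauses over the mixed cnf-xor state can be replayed verbatim over the CNF encoding without introducing any clause that resolution could not itself obtain. By comparison, the positive half~(i) is routine once Theorem~\ref{Thm:PExpUrquhart} is available, the only mild care being the polynomial bound on the number $k$ of xor-assumptions.
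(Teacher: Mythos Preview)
Your proposal is correct and follows exactly the approach the paper sketches: the paper does not give a separate proof of this corollary but justifies it in the sentence immediately preceding it, invoking the CDCL--resolution equivalence of~\cite{PipatsrisawatDarwiche:AI2011} together with the observation that unit propagation on xor-clauses is efficiently simulated by unit propagation on their CNF translation. Your write-up is simply a careful expansion of that two-line justification, and the details you supply---the polynomial bound on $k$, cnf-compatibility of the furthest cut, and the degree-boundedness ensuring constant-width clausifications---are all sound.
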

In practice, the CDCL part does not usually make the correct xor-assumptions
needed to compute the empty implying or-clause,
but if parity explanations are used in learning
as explained in the next section,
instances generated from parity graphs can be solved very fast.

%--------------------------------------------------------------------------
%
%--------------------------------------------------------------------------
%
%--------------------------------------------------------------------------
\section{Learning parity explanations}
\label{Sec:Learning}

As explained in Sect.~\ref{Sec:ParityExplanations},
parity explanations can be used to derive implying or-clauses,
required by the conflict analysis engine of the CDCL solver,
that are shorter than those derived by the classic implicative explanations.
In addition to this,
parity explanations can be used to derive \emph{new xor-clauses} that
are logical consequences of $\xorclauses$;
these xor-clauses $D$ can then be \emph{learned},
meaning that $\xorclauses$ is extended to $\xorclauses \land D$,
the goal being to increase the deduction power of the xor-reasoning module.
As an example,
consider again Ex.~\ref{ex:parityexp} and
recall that the parity explanation for $v_{12}$ under the cut 2 is $d$.
Now $\xorclauses \Models (d \Iff f)$,
i.e.~$\xorclauses \Models (d \X f \X \True)$,
holds,
and we can extend $\xorclauses$ to $\xorclauses' = \xorclauses \land (d \X f \X \True)$
while preserving all the satisfying truth assignments.
In fact,
it is not possible to deduce $f$ from $\xorclauses \land (d)$ by using \UPsys,
but $f$ can be deduced from $\xorclauses' \wedge (d)$.
Thus learning new xor-clauses derived from parity explanations
can increase the deduction power of the \UPsys{} inference system
in a way similar to conflict-driven clause learning increasing
the power of unit propagation in CDCL SAT solvers.

However,
if all such derived xor-clauses are learned,
it is possible to learn the same xor-clause many times,
as illustrated in the following example and
Fig.~\ref{Fig:Trail1}.
\begin{example}
  Let $\xorclauses = (a \X b \X c \X \True) \land (b \X c \X d \X e) \land ...$
  and
  assume that CNF part solver gives its first decision level literals
  $a$ and $\neg c$ as xor-assumptions to the \UPsys-module;
  the module deduces $b$ and returns it to the CNF solver.
  At the next decision level the CNF part guesses $d$, gives it to
  \UPsys-module, which deduces $e$, returns it to the CNF part, and
  the CNF part propagates it so that a conflict occurs.
  Now the xor-implied literal $e$ is explained and
  a new xor-clause $D = (a \X d \X e \X \top)$ is learned in $\xorclauses$.
  After this the CNF part backtracks,
  implies $\neg d$ at the decision level 1, and
  gives it to the \UPsys-module;
  the module can then deduce $\neg e$ \emph{without} using $D$.
  If $\neg e$ is now explained,
  the same ``new'' xor-clause $(a \X d \X e \X \top)$ can be derived.
  \hfill$\clubsuit$
\end{example}

\begin{figure}[t]
  \centering
  \includegraphics[width=\textwidth]{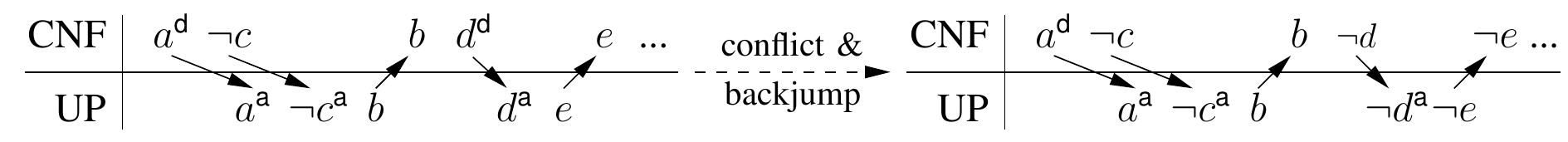}
  \caption{Communication between CNF part and \UPsys-module in a case when duplicate xor-clauses are learned; the \textsf{\scriptsize d} and \textsf{\scriptsize a} superscripts denote decision literals and xor-assumptions, respectively.}
\label{Fig:Trail1}
\end{figure}

The example illustrates a commonly occurring case in which a
derived xor-clause contains two or more literals on the latest decision level
($e$ and $d$ in the example);
in such a case,
the xor-clause may already exist in $\xorclauses$.
A conservative approach to avoid learning the same xor-clause twice,
under the reasonable assumption that the CNF and xor-reasoning module
parts saturate their propagations before new heuristic decisions are made,
is to disregard derived xor-clauses that have two or more
variables assigned on the latest decision level.
If a learned xor-clause for xor-implied literal $\IL$
does not have other literals on the latest decision level, it can be used
to infer $\IL$ with fewer decision literals.
Note that it may also happen that an implying or-clause for
an xor-implied literal $\IL$ does not contain any literals besides $\IL$
on the latest decision level;
the CNF part may then compute a conflict clause that does not have any
literals on the current decision level, which needs to be treated appropriately.

In order to avoid slowing down propagation in our implementation, we store
and remove learned xor-clauses using a strategy adopted from $\Minisat{}$: the
maximum number of learned xor-clauses is increased at each restart and the
``least active'' learned xor-clauses are removed when necessary. However, using
the conservative approach to learning xor-clauses, the total number of learned
xor-clauses rarely exceeds the number of original xor-clauses.

\section{General xor-derivations}
\label{Sec:General}

So far in this paper we have considered
a very simple xor-reasoning module capable only of unit propagation.
We can in fact extend the introduced concepts to more general inference
systems and derivations.
Define an \emph{xor-derivation} similarly to \UPsys{}-derivation except
that there is only one inference rule,
$\genrule: \inferencerule{C_1}{C_2}{C_1 \oplus C_2 \oplus \top}$,
where $C_1$ and $C_2$ are xor-clauses. The inference rule $\genrule$ is a generalization of the rules $\mbox{Gauss}^-$ and $\mbox{Gauss}^+$ in~\cite{BaumgartnerMassacci:CL2000}.
Now Thms.~\ref{Thm:PExpCorrectness} and \ref{Thm:PExpDisjunction}
can be shown to hold for such derivations as well.

As another concrete example of xor-reasoning module implementing a sub-class of \genrule,
consider the \ECAIsys{} module presented in~\cite{LJN:ECAI2010}.
In addition to the unit propagation rules of \UPsys{} in Fig.~\ref{Fig:UPRules},
it has inference rules allowing equivalence reasoning:
\[
\begin{array}{l@{\qquad\qquad}l}
  \eqvruleP:\ 
  $\inferencerule{x \oplus y \oplus \True}{C}
  {\simplification{C}{x}{y}}$
  &
  \eqvruleN:\ 
  $\inferencerule{x \oplus y}{C}
  {\simplification{C}{x}{(y\oplus\True)}}$
\end{array}
\]
where the symbols $x$ and $y$ are variables while
$C$ is an xor-clause in the normal form with an occurrence of $x$.
Note that these \ECAIsys{} rules are indeed instances of
the more general inference rule \genrule.
For instance,
given two xor-clauses
$C_1 = (c \oplus d \oplus \top) $ and $C_2 = (b \oplus d \oplus e)$,
the \ECAIsys{}-system can produce the xor-clause
$\simplification{C_2}{d}{c} = (b \oplus c \oplus e)$
which is also inferred by \genrule:
$ (C_1 \oplus C_2 \oplus \top) = ((c \oplus d \oplus \top) \oplus (b \oplus d \oplus e) \oplus \top) = (b \oplus c \oplus e) $.

\begin{figwindow}[2,r,%
\makebox[.47\textwidth][c]{\includegraphics[width=.43\textwidth]{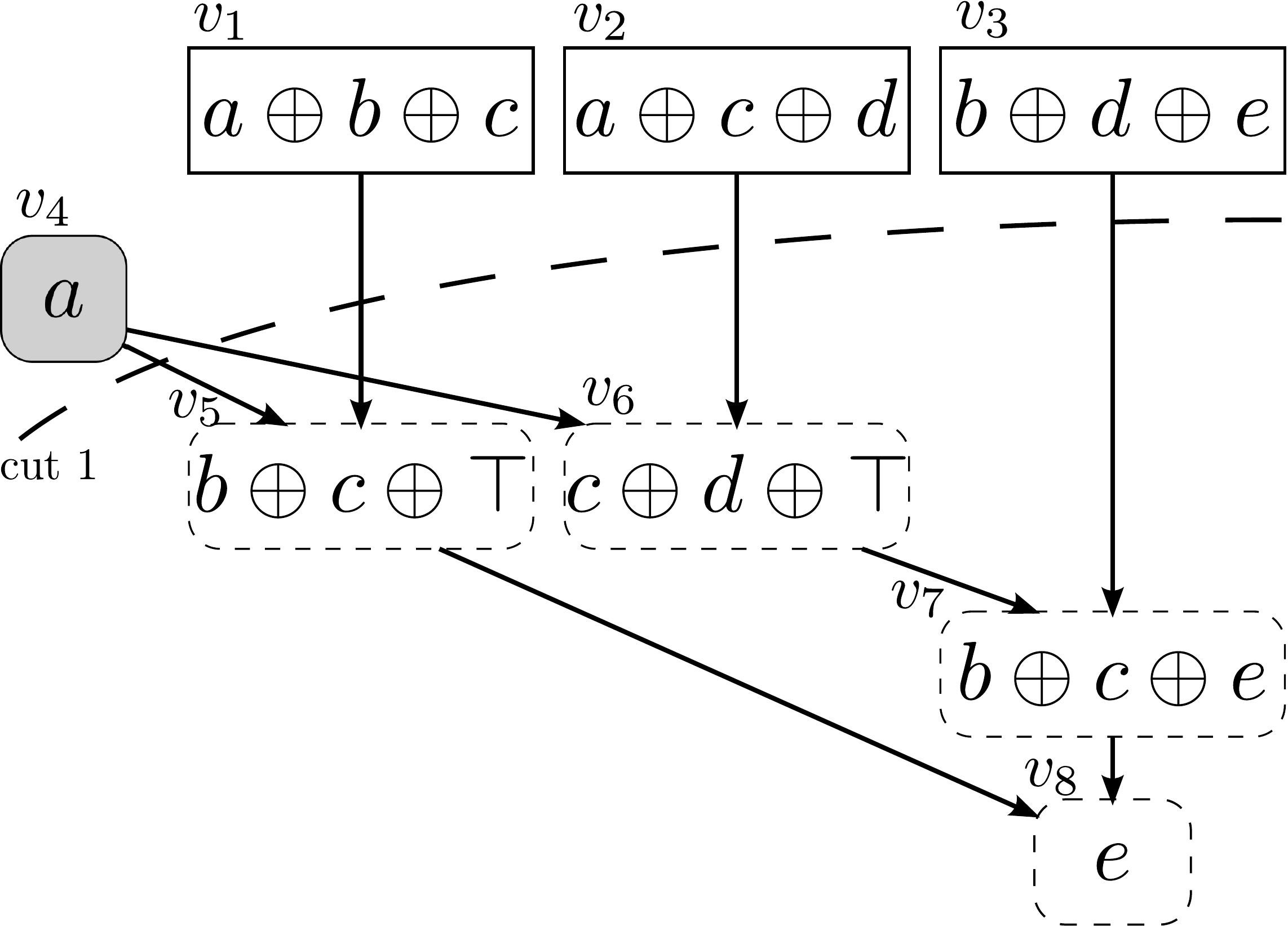}},%
{A $\ECAIsys$-derivation\label{Fig:ECAI}}]
\ECAIsys{}-derivations are defined similarly to \UPsys{}-derivations.
As an example,
Fig.~\ref{Fig:ECAI} shows a $\ECAIsys$-derivation on $ \xorclauses \land (a) $,
where $ \xorclauses = (a \X b \X c) \land (a \X c \X d) \land (b \X d \X e)$.
The literal $e$ is $\ECAIsys$-derivable on
$ \xorclauses \land (a) $; the xor-reasoning module returns
$e$ as an xor-implied literal on $\xorclauses$ after $a$ is given
as an xor-assumption.
The cnf-compatible cut 1 for the literal $e$ gives the implicative explanation
$(a)$ and
thus the implying or-clause $({\neg a} \lor e)$ for $e$.
Parity explanations are defined for $\ECAIsys$ in the same way as for
\UPsys{};
the parity explanation for the literal $e$ in the figure is $\top$
and
thus the implying or-clause for $e$ is $(e)$.
Observe that $e$ is \emph{not} \UPsys{}-derivable from $\xorclauses \land (a)$,
i.e.~\ECAIsys{} is a stronger deduction system than \UPsys{} in this sense.
\end{figwindow}

Parity explanations can also be computed in another xor-reasoning module,
\ECsys{} presented in~\cite{LJN:ICTAI2011},
that is based on equivalence class manipulation.
We omit this construction due to space constraints.

\section{Experimental results}

We have implemented a prototype solver
integrating both xor-reasoning modules (\UPsys{} and \ECAIsys{}) to
\Minisat{} \cite{EenSorensson:2004} (version 2.0 core) solver.
% as described in the section~3.
%
In the experiments we focus on the domain of logical cryptanalysis by modeling
a ``known cipher stream'' attack on stream ciphers Bivium, Crypto-1, Grain,
  Hitag2, and Trivium. To evaluate the performance of the proposed techniques, we include both unsatisfiable and satisfiable instances. 
In the unsatisfiable instances, generated with
\Tool{grain-of-salt}~\cite{SoosTools}, the task is to recover the internal cipher state when 256 output stream bits are given. This is infeasible in practice, so the instances are made easier and
also unsatisfiable by assigning a large enough number of internal state bits
randomly. Thus, the task becomes to prove that it is not possible to assign the
remaining bits of the internal cipher state so that the output would match the
given bits.
To include also satisfiable instances we modeled a different kind of attack on
the ciphers Grain, Hitag2 and Trivium where the task is to recover the full key
when a small number of cipher stream bits are given. In the attack, the IV and
a number of key stream bits are given. There are far fewer generated cipher
stream bits than key bits, so a number of keys probably produce the same prefix
of the cipher stream. 
All instances were converted into (i) the standard DIMACS
CNF format, and (ii) a DIMACS-like format allowing xor-clauses as well.
Structurally these instances are interesting for
benchmarking xor-reasoning as they have a large number of tightly connected
xor-clauses combined with a significant CNF part.

We first compare the following solver configurations:
(i) unmodified \Minisat{},
(ii) \Tool{up}: \Minisat{} with watched variable based unit propagation on xor-clauses,
(iii) \Tool{up-pexp}: \Tool{up} extended with parity explanations,
(iv) \Tool{up-pexp-learn}: \Tool{up-pexp} extended with xor-clause learning,
and
(v) \Tool{up-subst-learn}: \Tool{up} using \ECAIsys{}-module to compute parity xor-explanations and xor-clause learning.
The reference configuration \Tool{up} computes closest cnf-compatible cut
parity explanations, and the other configurations use also furthest cuts
selectively as described in Sect.~\ref{Sec:ParityExplanations}.
We also tested first UIP cuts, but the performance did not improve.

\begin{figure}[t]
  \includegraphics[width=0.49\textwidth,height=27mm]{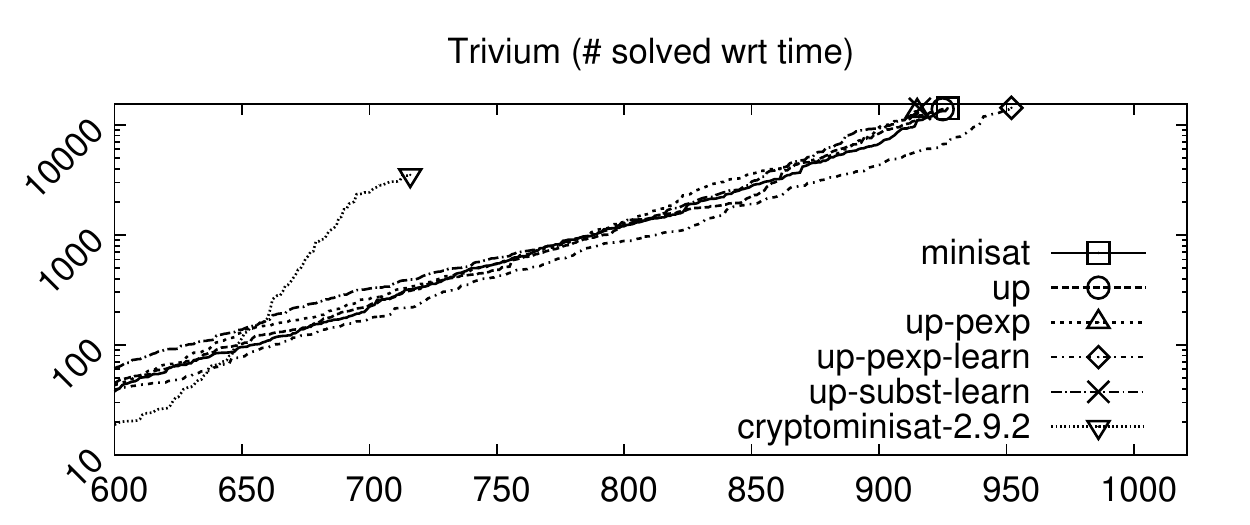}
  \hfill
  \includegraphics[width=0.49\textwidth,height=27mm]{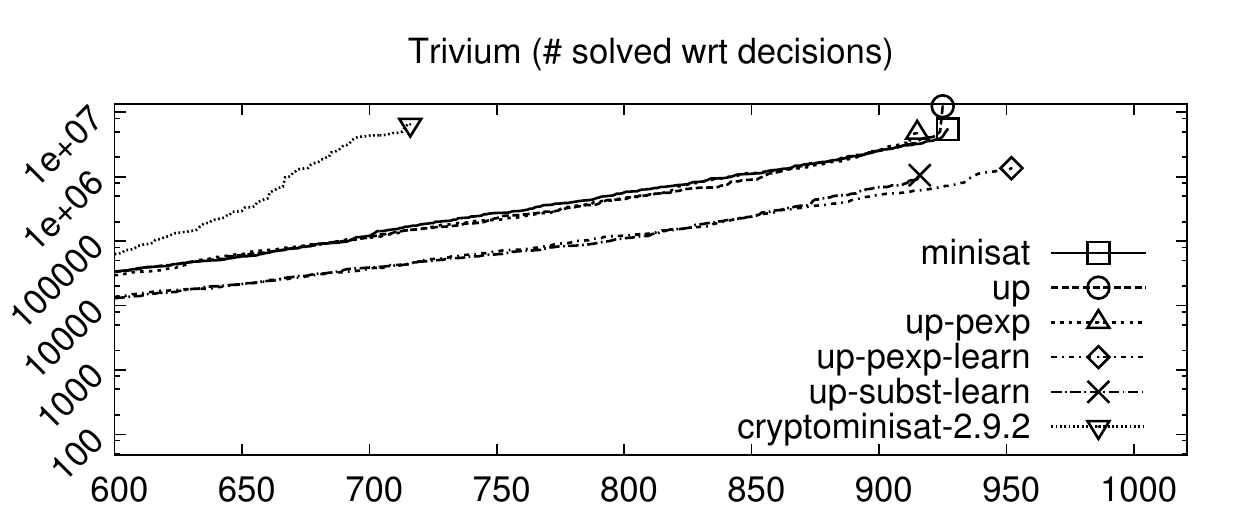}%
  \caption{Number of solved instances with regard to time and decisions
    on satisfiable Trivium benchmark (1020 instances, 51 instances per generated cipher stream length ranging from 1 to 20 bits)}
\label{Fig:Trivium}
\end{figure}

The results for the satisfiable Trivium benchmarks are shown in
Fig.~\ref{Fig:Trivium}.
Learning xor-clauses reduces the number of decisions needed substantially,
and in the case of the computationally less expensive \UPsys{} reasoning module
this is also reflected in the solving time and
in the number of solved instances.
On the other satisfiable benchmark sets learning new xor-clauses also reduced 
the number of required decisions significantly but
the number of propagations per decision is also greatly increased due to increased deduction power
and the reduction is not really reflected in the solving time.

The results for the unsatisfiable benchmarks are shown in
Fig.~\ref{Fig:UnsatResults}. 
Parity explanations reduce decisions on Grain and Hitag2,
leading to fastest solving time.
Learning parity explanations reduces explanations on all benchmarks except
Grain and gives the best solving time on Trivium.
Equivalence reasoning seems to reduce decisions slightly with the cost of increased solving time.
Obviously more work has to be done to improve data structures and
adjust heuristics
so that the theoretical power of parity explanations and xor-clause learning
can be fully seen also in practice.

We also ran \Cryptominisat{} version 2.9.2~\cite{Soos} on the benchmarks.
As shown in Figs.~\ref{Fig:Trivium} and \ref{Fig:UnsatResults},
it performs
(i) extremely well on the unsatisfiable Crypto-1 and Hitag2 instances
due to ``failed literal detection'' and other techniques,
but
(ii) not so well on our satisfiable Trivium instances,
probably due to differences in restart policies or other heuristics.
%
%Since \Cryptominisat{} is based on a newer version of \Minisat{} and
%uses different techniques, heuristics, and optimizations,
%more detailed comparison is omitted here.
%%detailed comparisons with our solver configurations are not very informative. 

\iffalse %ORIGINAL
We also ran \Cryptominisat{} version 2.9.2~\cite{Soos} on the benchmarks.
It performs extremely well except for the satisfiable Trivium instances in which
it performed rather poorly.
Since \Cryptominisat{} is based on a newer version of \Minisat{} and
uses different techniques, heuristics, and optimizations,
more detailed comparison is omitted here.
%detailed comparisons with our solver configurations are not very informative. 
\fi %ORIGINAL

\begin{figure*}[t]
\centering
{\scriptsize
\begin{tabular}{|l|r@{\ \ }r@{\ \ }r|r@{\ \ }r@{\ \ }r|r@{\ \ }r@{\ \ }r|r@{\ \ }r@{\ \ }r|r@{\ \ }r@{\ \ }r|}
\hline &

\multicolumn{3}{c|}{Bivium} &
\multicolumn{3}{c|}{Crypto-1} &
\multicolumn{3}{c|}{Grain} &
\multicolumn{3}{c|}{Hitag2} & 
\multicolumn{3}{c|}{Trivium} \\
Solver &
\# & Dec. & Time &
\# & Dec. & Time &
\# & Dec. & Time &
\# & Dec. & Time &
\# & Dec. & Time \\
\hline
\SMinisat{} & $\bf 51$ & $834$ & $\bf 80.9$ 
& $\bf 51$ & $781$ & $691.1$ 
& $1$ & -  & - 
& $35$ & $428$ & $440.0$ 
& $\bf 51$ & $55$ & $5.7$ 
\\
\STool{up} & $\bf 51$ & $985$ & $127.7$ 
& $\bf 51$ & $1488$ & $1751.8$ 
& $\bf 51$ & $40$ & $13.8$ 
& $39$ & $291$ & $403.9$ 
& $\bf 51$ & $59$ & $8.0$ 
\\
\STool{up-pexp} & $\bf 51$ & $1040$ & $147.8$ 
& $\bf 51$ & $1487$ & $1748.2$ 
& $\bf 51$ & $\bf 35$ & $10.9$ 
& $37$ & $124$ & $148.0$ 
& $\bf 51$ & $62$ & $9.8$ 
\\
\STool{up-pexp-learn} & $47$ & $651$ & $114.0$ 
& $\bf 51$ & $1215$ & $1563.0$ 
& $36$ & $122$ & $87.7$ 
& $37$ & $222$ & $255.4$ 
& $\bf 51$ & $\bf 24$ & $\bf 3.7$ 
\\
\STool{up-subst-learn} & $47$ & $616$ & $336.4$ 
& $\bf 51$ & $1037$ & $2329.5$ 
& $37$ & $70$ & $90.3$ 
& $36$ & $215$ & $374.8$ 
& $\bf 51$ & $29$ & $12.9$ 
\\
\STool{cryptominisat-2.9.2} & $\bf 51$ & $\bf 588$ & $89.8$ 
& $\bf 51$ & $\bf 0$ & $\bf 0.06$ 
& $\bf 51$ & $89$ & $\bf 10.4$ 
& $\bf 51$ & $\bf 0$ & $\bf 0.07$ 
& $\bf 51$ & $71$ & $6.04$ 
\\
\hline
\end{tabular}
}
\caption{Results of the unsatisfiable benchmarks showing the number of solved instances (\#) within the 4h time limit, median of decisions ($\times 10^3$), and median of solving time}
\label{Fig:UnsatResults}
\end{figure*}

\section{Conclusions}

We have shown how to compute linearity exploiting
parity explanations for literals deduced in an xor-reasoning module.
Such explanations can be used
(i) to produce more compact clausal explanations
for the conflict analysis engine of a CDCL solver
incorporating the xor-reasoning module,
and
(ii) to derive new parity constraints that can be learned in order to
boost the deduction power of the xor-reasoning module.
It has been proven that parity explanations allow very short refutations
of some formulas whose CNF translations do not have polynomial size resolution
proofs,
even when using a simple xor-reasoning module capable only of unit-propagation.
The experimental evaluation suggests that parity explanations and xor-clause
learning can be efficiently implemented and demonstrates promising performance
improvements also in practice.

\bibliographystyle{splncs}
\bibliography{paper}

\newpage
\appendix
\section{Proofs}

\renewcommand{\thetheorem}{\ref{Thm:PExpCorrectness}}
\begin{theorem}
  Let $\igraph = \Tuple{ \vertices, \edges, \Labfunc}$ be a \UPsys-derivation
  %for $\Pair{\xorclauses}{\Set{\AL_1,...,\AL_k}}$,
  on $\xorclauses \wedge \AL_1 \wedge \dots \wedge \AL_k$,
  $v$ a node in it,
  and
  $\Cut = \Pair{\VA}{\VB}$ a cut for $v$.
  %It holds that $\xorclauses \Models (\PExpl{v,\Cut} \Iff \Labfunc(v))$.
  %It holds that $\xorclauses \Models (\PExpl{v,\Cut} \Iff \Lab{v})$.
  It holds that $\xorclauses \Models {\PExpl{v,\Cut} \Iff \Lab{v}}$.
\end{theorem}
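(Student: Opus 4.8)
The plan is to fix the cut $\Cut = \Pair{\VA}{\VB}$ and prove, by induction following a topological order of the DAG $\igraph$, the slightly more general claim that $\xorclauses \Models (\tmpf{u}{\Cut} \Iff \Lab{u})$ holds for \emph{every} node $u \in \vertices$, the theorem being the instance $u = v$. First I would dispatch two preliminaries: that $\tmpfname{\Cut}$ is well-defined on all of $\vertices$ (cases E1--E3 are non-recursive, and E4 recurses only into the two parents of $u$, which precede $u$ in the acyclic graph, so the recursion terminates and the induction is well-founded), and that no \UPsys-rule introduces a fresh variable, so $\VarsOf{\Lab{u}}$ and $\VarsOf{\tmpf{u}{\Cut}}$ stay inside $\VarsOf{\xorclauses} \cup \VarsOf{\Set{\AL_1,\dots,\AL_k}}$ --- this matters since an xor-clause is satisfied only by assignments total on its variables, and I want the $\Models$-statements to be non-vacuous.

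For the base cases I would handle the three non-recursive situations. Under E1, $\tmpf{u}{\Cut}=\True$ and $\Lab{u}\in\xorclauses$, hence $\xorclauses\Models\Lab{u}$ and so $\xorclauses\Models(\True\Iff\Lab{u})$. Under E2 and E3, $\tmpf{u}{\Cut}=\Lab{u}$ and $(\Lab{u}\Iff\Lab{u})$ normalizes to $\True$, i.e.\ is a tautology, so the statement is immediate. Together with E4 these exhaust all nodes, since an input node is labeled by a clause of $\xorclauses$ or by some $\AL_i$, and a non-input node lies in $\VA$ or $\VB$.

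The inductive step is the core. Let $u$ be a non-input node in $\VB$ with parents $u_1,u_2$; then $\Lab{u}$ is obtained from $\Lab{u_1}$ and $\Lab{u_2}$ by one of the \UPsys-rules, and $\tmpf{u}{\Cut}=\tmpf{u_1}{\Cut}\X\tmpf{u_2}{\Cut}\X\True$ by E4. Because all xor-clauses are in normal form and the rule is applied only when the substituted variable actually occurs (the two provisos making the \UPsys-rules into equations), Eq.~\eqref{Eq:PEP} or~\eqref{Eq:PEN} gives the \emph{tautology} $(\Lab{u_1}\X\Lab{u_2}\X\True)\Iff\Lab{u}$, while the induction hypothesis gives $\xorclauses\Models(\tmpf{u_i}{\Cut}\Iff\Lab{u_i})$ for $i=1,2$. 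I would then combine these: reading each xor-clause as an affine form over $\mathrm{GF}(2)$ --- so $\X$ is addition and $C\Iff C'$ is the equation $C=C'$ --- the equation for $\Lab{u}$ follows from the tautology by substituting $\tmpf{u_i}{\Cut}$ for $\Lab{u_i}$, a substitution licensed modulo $\xorclauses$ exactly because every $\pi\Models\xorclauses$ assigns $\tmpf{u_i}{\Cut}$ and $\Lab{u_i}$ the same value, hence assigns $\tmpf{u_1}{\Cut}\X\tmpf{u_2}{\Cut}\X\True$ and $\Lab{u_1}\X\Lab{u_2}\X\True=\Lab{u}$ the same value. This gives $\xorclauses\Models(\tmpf{u}{\Cut}\Iff\Lab{u})$ and closes the induction.

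The one genuinely delicate point I expect is this combination step: justifying that the two hypothesis-equivalences may be ``xor-ed into'' the rule-equation \eqref{Eq:PEP}/\eqref{Eq:PEN} without tripping over the totality side-conditions in the definition of $\Models$ for xor-clauses. The $\mathrm{GF}(2)$ viewpoint above is the cleanest way to discharge it, and it additionally makes transparent the observation $\VarsOf{\PExpl{v,\Cut}}\subseteq\VarsOf{\CExpl{v,\Cut}}$, since $\PExpl{v,\Cut}$ is literally a $\mathrm{GF}(2)$-linear combination of the labels that $\CExpl{v,\Cut}$ conjoins. If one prefers to stay with truth assignments, the same step is a short case split on the values $\pi$ gives $\Lab{u_1}$ and $\Lab{u_2}$, using that such a $\pi$ is total on $\VarsOf{\xorclauses}$ and hence on all clauses in sight.
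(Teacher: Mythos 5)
Your proposal is correct and follows essentially the same route as the paper's proof: a structural induction over the derivation establishing $\xorclauses \Models (\tmpf{u}{\Cut} \Iff \Lab{u})$ for every node $u$, with the identical case split E1--E4 and the inductive step combining the validity of Eqs.~\eqref{Eq:PEP}/\eqref{Eq:PEN} with the induction hypotheses for the two parents. Your added remarks on well-foundedness, the provisos on the rules, and the $\mathrm{GF}(2)$ reading of the combination step only make explicit what the paper leaves implicit.
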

\begin{proof}
  We show that $\xorclauses \Models (\tmpf{u}{\Cut} \Iff \Lab{u})$ for each
  node $u$ in $\igraph$ by induction on the structure of the derivation
  $\igraph$.

  If $u$ is an input node with $\Lab{u} \in \xorclauses$,
  then $\tmpf{u}{\Cut} = \top$ and
  $\xorclauses \Models (\top \Iff \Lab{u})$
  as $\Lab{u}$ is a conjunct in $\xorclauses$.

  If $u$ is an input node with $\Lab{u} \in \Set{\AL_1,...,\AL_k}$,
  then $\tmpf{u}{\Cut} = \Lab{u}$ and
  $\xorclauses \Models (\Lab{u} \Iff \Lab{u})$ holds trivially.

  If $u$ is a non-input node in $\VA$, then 
  $\tmpf{u}{\Cut} = \Lab{u}$ and $\xorclauses
  \Models (\Lab{u} \Iff \Lab{u}) $ holds trivially.

  If $u$ is a non-input node in $\VB$ with two incoming edges from nodes $u_1$
  and $u_2$ respectively,
  then
  $\tmpf{u}{\Cut} = \tmpf{u_1}{\Cut} \X \tmpf{u_2}{\Cut} \X \top$.
  Because $\Lab{u}$ is obtained from $\Lab{u_1}$ and $\Lab{u_2}$ by applying
  either $\unitruleP$ or $\unitruleN$,
  and the Eqs.~\eqref{Eq:PEP} and \eqref{Eq:PEN} are valid,
  we have $\xorclauses \Models ({\Lab{u_1} \X \Lab{u_2} \X \top} \Iff \Lab{u})$.
  By the induction hypothesis the equations
  $\xorclauses \Models ( \tmpf{u_1}{\Cut} \Iff \Lab{u_1})$
  and $\xorclauses \Models ( \tmpf{u_2}{\Cut} \Iff \Lab{u_2})$ hold,
  so we have
  $\xorclauses \Models (\tmpf{u_1}{\Cut} \X  \tmpf{u_2}{\Cut} \X \top \Iff \Lab{u})$,
  i.e.~$\xorclauses \Models (\tmpf{u}{\Cut} \Iff \Lab{u})$.
  
  For each $u$ in $G$,
  it holds that $\xorclauses \Models (\tmpf{u}{\Cut} \Iff \Lab{u})$.
  It follows that $\xorclauses \Models {\PExpl{v,\Cut} \Iff \Lab{v}}$.
  \qed  
\end{proof}

\renewcommand{\thetheorem}{\ref{Thm:PExpDisjunction}}
\begin{theorem}
  Let $\igraph = \Tuple{ \vertices, \edges, \Labfunc}$ be a \UPsys-derivation
  on $\xorclauses \land {\AL_1 \land \dots \land \AL_k}$,
  $v$ a node with $\Lab{v}=\IL$ in it,
  and
  $\Cut = \Pair{\VA}{\VB}$ a cnf-compatible cut for $v$.
  Then
  $\xorclauses \Models {(\bigwedge_{u \in S} \Lab{u}) \Implies \IL}$,
  where $S = \Setdef{u \in \RS{\Cut}}{\VarsOf{\Lab{u}} \subseteq \VarsOf{\PExpl{v,\Cut}}}$.
\end{theorem}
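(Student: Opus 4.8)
The plan is to derive the claim from Theorem~\ref{Thm:PExpCorrectness} together with the already-noted soundness of the implicative explanation. By Theorem~\ref{Thm:PExpCorrectness}, $\xorclauses \Models {\PExpl{v,\Cut} \Iff \IL}$, so it suffices to establish $\xorclauses \wedge \bigwedge_{u \in S} \Lab{u} \Models \PExpl{v,\Cut}$. On the other hand, since $\Cut$ is cnf-compatible, the implicative explanation $\CExpl{v,\Cut} = \bigwedge_{u \in \RS{\Cut}} \Lab{u}$ is a conjunction of literals for which $\xorclauses \Models {\CExpl{v,\Cut} \Implies \IL}$ holds; combining this with Theorem~\ref{Thm:PExpCorrectness} yields $\xorclauses \wedge \bigwedge_{u \in \RS{\Cut}} \Lab{u} \Models \PExpl{v,\Cut}$. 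The whole problem is therefore to pass from the reason set $\RS{\Cut}$ to the (possibly smaller) set $S$.

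First I would record the structural fact underlying the definition of $S$: unrolling the recursion for $\PExpl{v,\Cut}$ with the exclusive-or reading of case E4 (the ``odd number of occurrences'' observation), and using that every $\Lab{u}$ with $u \in \RS{\Cut}$ is a single literal by cnf-compatibility, one sees that $\PExpl{v,\Cut}$ is the normal form of a linear combination of the literals $\{\Lab{u} \mid u \in \RS{\Cut}\}$ plus a constant contributed by the input vertices labelled with clauses of $\xorclauses$ and by the E4 steps. Hence $\VarsOf{\PExpl{v,\Cut}} \subseteq \bigcup_{u \in \RS{\Cut}} \VarsOf{\Lab{u}}$, and, straight from the definition of $S$, no variable occurring in $\PExpl{v,\Cut}$ is the variable of any $\Lab{u}$ with $u \in \RS{\Cut}\setminus S$. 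Consequently the conjunctions $\bigwedge_{u \in S}\Lab{u}$ and $\bigwedge_{u \in \RS{\Cut}}\Lab{u}$ fix exactly the same values to every variable that occurs in $\PExpl{v,\Cut}$.

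To close, take any $\pi$ with $\pi \models \xorclauses$ and $\pi \models \Lab{u}$ for each $u \in S$; it remains to show $\pi \models \PExpl{v,\Cut}$, since then $\pi \models \IL$ by Theorem~\ref{Thm:PExpCorrectness}. As $\VarsOf{\PExpl{v,\Cut}}$ lies among the variables pinned by $\bigwedge_{u \in S}\Lab{u}$, the xor-clause $\PExpl{v,\Cut}$ is total under $\pi$ and its truth value is a function of those pinned values only; by the previous paragraph these coincide with the values forced by $\bigwedge_{u \in \RS{\Cut}}\Lab{u}$, and every model of $\xorclauses \wedge \bigwedge_{u \in \RS{\Cut}}\Lab{u}$ satisfies $\PExpl{v,\Cut}$, so $\pi$ must as well.

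The main obstacle, I expect, is precisely this last step: because $\PExpl{v,\Cut}$ is a \emph{parity}, fixing the literals of $S$ only pins it to \emph{some} constant, and what must be argued is that this constant is \emph{true}. The argument above transports that value along the implicative explanation on the full reason set, which is clean whenever $\xorclauses \wedge \bigwedge_{u \in \RS{\Cut}}\Lab{u}$ has a model; to make it airtight in general (for instance when that conjunction is unsatisfiable, so the implicative explanation is vacuous, or for degenerate cuts where $\PExpl{v,\Cut}$ collapses to a constant clause) I would instead fall back on the linear-algebra reading — writing $\IL \xor \PExpl{v,\Cut}$ explicitly as a linear combination of clauses of $\xorclauses$, substituting the values forced by $\bigwedge_{u\in S}\Lab{u}$, cancelling the $\RS{\Cut}\setminus S$ contributions, and reading off that the surviving constant is forced to be the one making the equivalence with $\IL$ hold.
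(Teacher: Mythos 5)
Your proposal is correct on the same terms as the paper's own argument and follows essentially the same route: Theorem~\ref{Thm:PExpCorrectness}, the implicative-explanation property $\xorclauses \Models {(\bigwedge_{u \in \RS{\Cut}} \Lab{u}) \Implies \IL}$, and the fact that the literals of $S$ pin every variable of $\PExpl{v,\Cut}$ to the same values as the full reason set does. The only presentational difference is that you transfer the pinned truth value of $\PExpl{v,\Cut}$ from models of $\xorclauses \land \bigwedge_{u\in\RS{\Cut}}\Lab{u}$ directly, whereas the paper runs the same ingredients as a proof by contradiction (if $\bigwedge_{u\in S}\Lab{u}$ entailed $\neg\IL$, then so would $\bigwedge_{u\in\RS{\Cut}}\Lab{u}$, against the implicative explanation). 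The residual case you flag---when $\xorclauses \land \bigwedge_{u\in\RS{\Cut}}\Lab{u}$ has no model---is precisely the point the paper dispatches with the phrase ``this means that $\xorclauses$ is unsatisfiable'', although the two opposing implications only yield unsatisfiability of $\xorclauses \land \bigwedge_{u\in\RS{\Cut}}\Lab{u}$, not of $\xorclauses$; so your caution is justified, and the linear-algebra fallback you sketch is not something the paper supplies either. Be aware that this corner is genuinely delicate: for a cut on a conflict node whose reason literals are themselves inconsistent (e.g.\ both $x$ and $x\oplus\top$ appear as reasons and cancel out of $\PExpl{v,\Cut}$), the $S$-literals alone do not force the pinned value of $\PExpl{v,\Cut}$ to be true, so an airtight completion needs an additional proviso (such as consistency of the reason literals with the xor-clauses used in the cone, which holds for the cuts actually used in the solver); with that proviso your main argument closes the proof, and without it you are exactly as complete as the paper.
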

\begin{proof}
  If $\xorclauses$ is unsatisfiable,
  then $\xorclauses \Models {(\bigwedge_{u \in S}\Lab{u}) \Implies \IL}$
  holds trivially.

  Assume that $\xorclauses$ is satisfiable.
  As $\VarsOf{\bigwedge_{u \in S}\Lab{u}} = \VarsOf{\PExpl{v,\Cut}}$
  and
  $\xorclauses \Models {\PExpl{v,\Cut} \Iff \IL}$ holds,
  it must be that either
  $\xorclauses \Models {(\bigwedge_{u \in S}\Lab{u}) \Implies \IL}$
  or
  $\xorclauses \Models {(\bigwedge_{u \in S}\Lab{u}) \Implies {\neg\IL}}$
  holds
  %(both cannot hold because $\xorclauses$ is satisfiable).
  (both cannot hold because then $\xorclauses$ would be unsatisfiable).
  If $\xorclauses \Models {(\bigwedge_{u \in S}\Lab{u}) \Implies \neg\IL}$
  holds,
  then,
  as $S \subseteq \RS{\Cut}$ holds,
  $\xorclauses \Models {(\bigwedge_{u \in \RS{\Cut}} \Lab{u}) \Implies {\neg\IL}}$
  would also hold.
  Combined with the property
  $\xorclauses \Models {(\bigwedge_{u \in \RS{\Cut}} \Lab{u}) \Implies {\IL}}$
  of implicative explanations,
  this means that $\xorclauses$ is unsatisfiable, contradicting our assumption.
  Therefore,
  $\xorclauses \Models {(\bigwedge_{u \in S}\Lab{u}) \Implies \IL}$
  must hold.
  \qed
\end{proof}
%MUU

%MYY

\renewcommand{\thetheorem}{\ref{Thm:PExpUrquhart}}
\begin{theorem}
  Let $G = \Tuple{V,E}$ be a %3-regular
  \urquhartgraph{} such that $c(G) = \True$.
  There is a \UPsys-refutation for $\urquhartxorclauses(G) \wedge q_1 \dots \wedge q_k $ for some xor-assumptions $ q_1,\dots,q_k$,
  a node $v$ with $\Lab{v}=\bot$ in it,
  and a cut $\Cut = \Tuple{\VA,\VB} $ for $v$
  %such that the parity explanation $\PExpl{v,\Cut}$ is the empty clause.
  such that $\PExpl{v,\Cut} = \top$.
  Thus $\urquhartxorclauses(G) \Models (\top \Iff \bot)$,
  showing $\urquhartxorclauses(G)$ unsatisfiable.
\end{theorem}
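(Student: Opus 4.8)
The plan is to build, for an arbitrary parity graph $G$ with $c(G) = \top$, a concrete $\UPsys$-refutation together with a cut whose parity explanation collapses to $\top$. First I would fix a spanning tree $T$ of $G$ (possible since $G$ is connected). The non-tree edges carry variables that I will supply as xor-assumptions: for each non-tree edge with label $q$, add the unary xor-assumption $q$ (say, $q$ rather than $q\oplus\top$; the polarity will not matter). Call these $q_1,\dots,q_k$. Intuitively, once all non-tree-edge variables are fixed, the tree equations propagate: process the tree leaves inward, at each step a node equation $\alpha(v)$ has all but one of its variables already determined, so $\UPsys$ applies $\unitruleP$/$\unitruleN$ to derive the unary clause on the remaining (parent) edge variable. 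Carrying this out from all leaves toward a chosen root $r$, every edge variable of $T$ gets a derived unary clause, and finally the root equation $\alpha(r)$ — all of whose variables are now unary-derived — yields, after the last propagation step, the empty clause $\bot$. This is a genuine $\UPsys$-refutation because each step is an instance of $\unitruleP$ or $\unitruleN$ and the inputs are the $\alpha(v)\in\urquhartxorclauses(G)$ and the assumptions $q_i$.

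Next I would choose the cut. Take $\Cut = \Pair{\VA}{\VB}$ where $\VA$ consists precisely of the input vertices — i.e. the vertices labelled by some $\alpha(v)$ and by the assumptions $q_i$ — and $\VB$ is everything else (all the non-input, derived vertices, including the $\bot$-vertex $v$). This is the \emph{furthest cut} for $v$: it is a valid cut for $v$ since $v\in\VB$ and all input predecessors lie in $\VA$. Then by the recursive definition of $\PExpl{\cdot,\cdot}$, every input vertex labelled by a clause of $\urquhartxorclauses(G)$ contributes $\top$ (case E1), every input vertex labelled by an assumption $q_i$ contributes $q_i$ (case E2), and every non-input vertex contributes the xor of its two predecessors' contributions together with $\top$ (case E4). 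So $\PExpl{v,\Cut}$ is, up to the constant $\top$'s, the xor of all the assumption literals $q_1,\dots,q_k$, each counted a certain number of times modulo $2$, namely the number of $b$-paths from its input vertex to $v$.

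The heart of the argument is therefore the parity bookkeeping: I must show that in this derivation each assumption variable $q_i$ appears an \emph{even} number of times (counting $b$-paths mod $2$) in $\PExpl{v,\Cut}$, and that the accumulated $\top$'s also cancel, so that $\PExpl{v,\Cut}=\top$. The clean way to see this is semantic rather than combinatorial: by Theorem~\ref{Thm:PExpCorrectness}, $\urquhartxorclauses(G)\Models(\PExpl{v,\Cut}\Iff\Lab{v})$, i.e.\ $\urquhartxorclauses(G)\Models(\PExpl{v,\Cut}\Iff\bot)$, so $\PExpl{v,\Cut}$ is an xor-clause that is falsified by every model of $\urquhartxorclauses(G)$. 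But $\urquhartxorclauses(G)$ is unsatisfiable exactly because $c(G)=\top$ (Lemma~4.1 of~\cite{DBLP:journals/jacm/Urquhart87}); more usefully, the xor of \emph{all} node equations $\bigoplus_{v\in V}\alpha(v)$ is the constant $\top$ (each edge variable occurs in exactly two node equations and cancels, leaving $\bigoplus_v c(v)\oplus\top=\top$ since $c(G)=\top$). So $\urquhartxorclauses(G)$ linearly implies $\top\Iff\bot$ already without the assumptions. Combined with the fact (from $\VarsOf{\PExpl{v,\Cut}}\subseteq\VarsOf{\CExpl{v,\Cut}}$ and the furthest-cut structure) that $\PExpl{v,\Cut}$ only involves assumption variables, and that the assumptions $q_i$ were chosen to be logically independent of $\urquhartxorclauses(G)$ (they label distinct non-tree-edge variables, which are free in the tree-propagation), the only xor-clause over those variables that $\urquhartxorclauses(G)$ can force to be equivalent to $\bot$ is $\top$ itself. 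Hence $\PExpl{v,\Cut}=\top$, giving $\urquhartxorclauses(G)\Models(\top\Iff\bot)$ and the unsatisfiability conclusion.

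I expect the main obstacle to be making the "tree propagation terminates at the root with $\bot$" claim fully rigorous — in particular handling the bookkeeping of constants $\top$ and literal polarities through the $\unitruleP$/$\unitruleN$ applications, and confirming that after all edge variables are assigned the root equation indeed normalises to the empty clause precisely when $c(G)=\top$ — and, relatedly, justifying cleanly that $\PExpl{v,\Cut}$ contains \emph{no} assumption variable with odd multiplicity. The semantic shortcut via Theorem~\ref{Thm:PExpCorrectness} plus the global linear dependency $\bigoplus_v\alpha(v)=\top$ is what I would lean on to avoid an explicit $b$-path counting argument; the alternative, purely combinatorial route would require tracking $\NofPaths{\cdot}{\cdot}\bmod 2$ through the tree, which is doable but more tedious.
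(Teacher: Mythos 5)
Your derivation and cut are exactly the paper's: fix a spanning tree, give the variables of the non-tree edges as xor-assumptions, eliminate the tree leaf by leaf with \unitruleP{}/\unitruleN{} until the empty clause is derived, and take the furthest cut. The gap is in the one step that carries the whole content of the theorem, namely that $\PExpl{v,\Cut}$ is \emph{syntactically} the clause $\top$. Your ``semantic shortcut'' cannot deliver this: because $c(G)=\top$, the formula $\operatorname{xorclauses}(G)$ is unsatisfiable (Lemma~4.1 of~\cite{DBLP:journals/jacm/Urquhart87}), so $\operatorname{xorclauses}(G)\Models(D\Iff\bot)$ holds vacuously for \emph{every} xor-clause $D$, and the entailment supplied by Theorem~\ref{Thm:PExpCorrectness} therefore gives no information about which xor-clause $\PExpl{v,\Cut}$ actually is. Your sentence ``the only xor-clause over those variables that $\operatorname{xorclauses}(G)$ can force to be equivalent to $\bot$ is $\top$ itself'' is backwards in this situation: an unsatisfiable theory forces every clause to be equivalent to $\bot$, and the freeness/independence argument you invoke would only be available if the theory had models --- which is precisely what the hypothesis $c(G)=\top$ rules out (you even note the unsatisfiability in the same paragraph). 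As written, your argument establishes only the trivial statement $\operatorname{xorclauses}(G)\Models(\PExpl{v,\Cut}\Iff\bot)$, not the identity $\PExpl{v,\Cut}=\top$.

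The missing piece is exactly the combinatorial bookkeeping you set aside. For the furthest cut the reason set consists of assumption vertices only, so $\PExpl{v,\Cut}$ is, up to the constant, the xor of those assumption labels whose vertices reach $v$ through an odd number of paths; one must show that each $q_i$ reaches $v$ through an even number of paths and that the accumulated constants yield $\top$. This is where the structure of the construction is used: a non-tree-edge variable $q_i$ occurs in exactly the two clauses $\alpha(u)$ and $\alpha(w)$ of its endpoints, each of which is consumed exactly once during the leaf elimination, so the assumption vertex of $q_i$ has exactly two paths to $v$ and cancels modulo $2$; the paper's proof is precisely this count (``exactly two paths from each xor-assumption node to $v$ and an odd number of non-input nodes''). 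Relatedly, the claim that the final propagation step produces the empty clause is where $c(G)=\top$ enters (the xor of all node equations is the inconsistent constant), which you mention but never connect to the derivation. In short: your construction and choice of cut are correct, but the identity $\PExpl{v,\Cut}=\top$ must be proved by the path-parity (or an equivalent explicit expansion) argument; it cannot be extracted from the semantic correctness theorem, because the theory in question has no models.
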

\begin{proof}
  Let $E' \subseteq E$ be a spanning tree of $G$,
  and $q_1, \dots, q_k$ the variables occurring on the labels of
  the edges in $E \backslash E'$.
  We construct a \UPsys{}-refutation $G'$ for
  $\urquhartclauses(G) \wedge q_1 \land \dots \wedge q_k$ by
  applying $\unitruleP$ twice for each variable $q_i$.
  Note for every leaf node $u$ in the spanning tree it holds that
  $C = \simplification{\simplification{\urquhartxorclause(u)}{q_1}{\top}\dots}{q_k}{\top}$ is a unit xor-clause $C$ fixing the value of some variable $x$.
  We pick a leaf node $u$,
  propagate value of the variable $x$ %in $C$
  by applying the rule
  $\unitruleP$ or $\unitruleN$,
  and remove the node $u$ from the spanning tree.
  %
  %We pick a leaf node $u$,
  %eliminate both occurrences of the variable $x$ %in $C$
  %by applying the rule
  %$\unitruleP$ or $\unitruleN$,
  %and remove the node $u$ from the spanning tree.
  %
  We proceed until all nodes in the spanning tree are eliminated and we have derived an empty xor-clause $\Labfunc(v) = \bot$ for a node $v$ in the \UPsys{}-refutation $G'$.
  Take the furthest cut $\Cut$.
  %
  %The cut $\Cut$ is acquired by selecting the nodes corresponding to
  %the xor-assumptions  $q_1,\dots,q_k$ for the reason part and
  Computing the parity explanation for $v$ we will get $\PExpl{v,\Cut} = \top$
  as
  there are exactly two paths from  each xor-assumption node $q_i$ to $v$
  and an odd number of non-input nodes.
  %There are exactly two paths from $v'$ to each xor-assumption $q_i$,
  %so the parity explanation $\PExpl{v',\Cut} $ is the empty clause.
  The construction is illustrated in Fig.~\ref{Fig:Urquhart}.
  \qed
\end{proof}

\begin{figure}%[tb]
  \centering
  %\begin{tabular}{c@{\quad\qquad}c}
    \begin{tabular}{@{}c@{}}
    \includegraphics[scale=0.8]{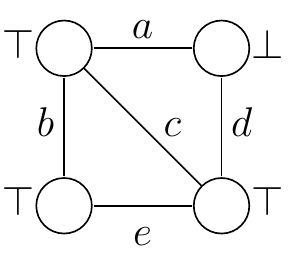} \\
    \small (a)
    \\
    \\
    \includegraphics[scale=0.8]{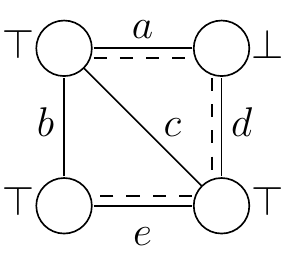} \\
    \small (b) \\ \\
    \end{tabular}
    \hfill
    \begin{tabular}{@{}c@{}}
    \includegraphics[scale=0.8]{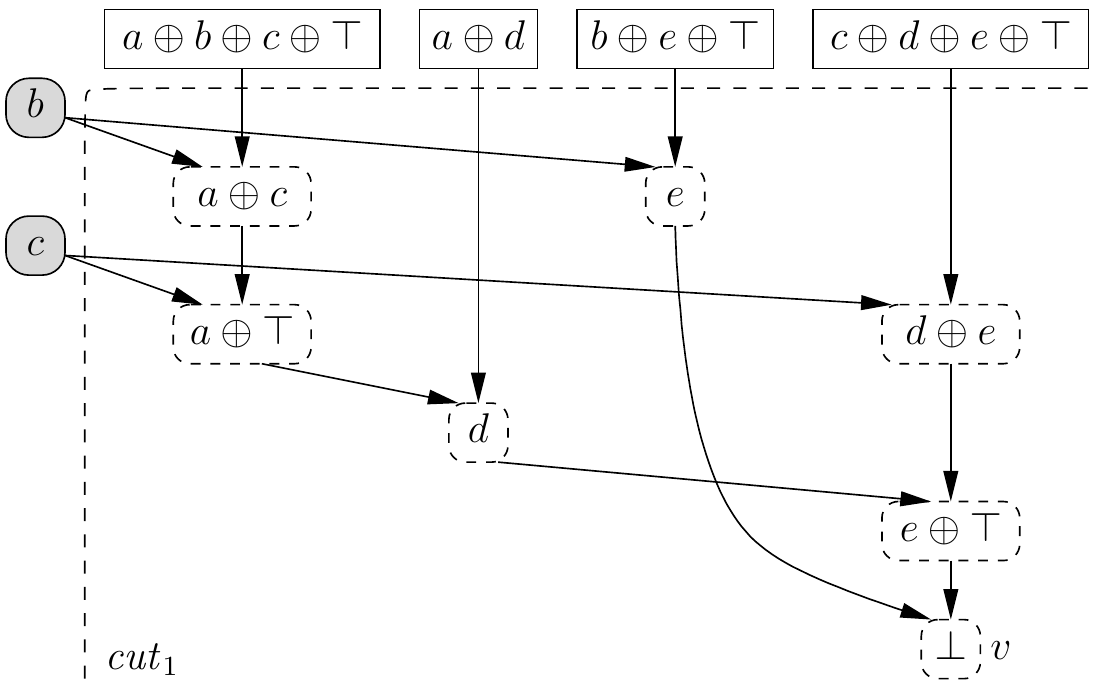} \\
    \small (c)
    \end{tabular}
  %\end{tabular}
  \caption{(a) a \urquhartgraph{} $G$, (b) a spanning tree of the graph,
    and (c) a \UPsys-refutation for $\urquhartxorclauses(G) \land b \land c$
    with $\PExpl{v, \mathit{cut}_1} = \top$.}
  \label{Fig:Urquhart}
\end{figure}

\iffalse
\begin{figure}%[tb]
  \centering
  \begin{tabular}{ccc}
    \includegraphics[scale=0.7]{ex-urquhart} &
    \includegraphics[scale=0.7]{ex-urquhart-spa} &
    \includegraphics[scale=0.7]{ex-urquhart-subst} \\
    \small (a) & \small (b) & \small (c)
  \end{tabular}
  \caption{(a) an \urquhartgraph{} $G$, (b) a spanning tree of the graph,
    and (c) a \UPsys-refutation for $\urquhartxorclauses(G) \land b \land c$
    with $\PExpl{v, \mathit{cut}_1} = \top$.}
  \label{Fig:Urquhart}
\end{figure}
\fi

\end{document}